\documentclass[pdflatex,sn-mathphys-num]{sn-jnl}
\usepackage{graphicx}
\usepackage{multirow}
\usepackage{amsmath,amssymb,amsfonts}
\usepackage{amsthm}
\usepackage{enumerate}
\usepackage[title]{appendix}
\usepackage{xcolor}
\usepackage{textcomp}
\usepackage{manyfoot}
\usepackage{booktabs}
\usepackage{algorithm}
\usepackage{algorithmicx}
\usepackage{algpseudocode}
\usepackage{listings}
\usepackage{bussproofs}
\usepackage{framed}
\usepackage{array}
\usepackage[normalem]{ulem}
\usepackage{xspace}

\theoremstyle{thmstyleone}
\newtheorem{theorem}{Theorem}
\newtheorem{proposition}[theorem]{Proposition}
\newtheorem{lemma}[theorem]{Lemma}

\theoremstyle{thmstyletwo}
\newtheorem{example}{Example}

\theoremstyle{thmstylethree}

\renewcommand{\iff}{\ \text{iff} \ }
\newcommand{\Lconst}[2]{%
\ifthenelse{\equal{#1}{*}}{%
\settowidth{\tlen}{\ensuremath{\text{start}}}\hbox
  to \tlen{\hfil\ensuremath{\text{#2}}}}{%
\text{#2}}}

\newcommand{\bi}{\begin{itemlist}}
\newcommand{\ei}{\end{itemlist}}
\newcommand{\bdl}{\begin{descriptionlist}}
\newcommand{\edl}{\end{descriptionlist}}
\newcommand{\bthm}{\begin{theorem}}
\newcommand{\ethm}{\end{theorem}}
\newcommand{\bcor}{\begin{corollary}}
\newcommand{\ecor}{\end{corollary}}

\newcommand{\mlimpl}{\mathrel{\rightarrow}}
\newcommand{\psubset}{\ensuremath{\mathrel{\subseteq}}}

\newcommand{\mlequiv}{\mathrel{\equiv}}

\newcommand{\mland}{\mathrel{\wedge}}
\newcommand{\mlor}{\mathrel{\vee}}

\renewcommand{\iff}{\ \text{iff} \ }

\DeclareMathOperator{\Cls}{Cls}

\newcommand{\ml}[1]{\ensuremath{\mathit{#1}}}

\newenvironment{descriptionlist}[1]%
        {\begin{list}{}{
                \setlength{\topsep}{3pt}
                \setlength{\itemsep}{3pt}
                \setlength{\parsep}{0pt}
                \settowidth{\labelwidth}{#1}
                \settowidth{\leftmargin}{#1\hspace{\labelsep}}}
        }{
        \end{list} }
\makeatletter
\newenvironment{itemlist}{%
  \ifnum \@itemdepth >\thr@@\@toodeep\else
    \advance\@itemdepth\@ne
    \edef\@itemitem{labelitem\romannumeral\the\@itemdepth}%
    \expandafter
    \begin{list}{\csname\@itemitem\endcsname}{%
                \setlength{\topsep}{3pt}
                \setlength{\itemsep}{3pt}
                \setlength{\parsep}{0pt}
                \settowidth{\labelwidth}{\csname\@itemitem\endcsname}
                \settowidth{\leftmargin}{\csname\@itemitem\endcsname\hspace{\labelsep}}%
                }\fi
        }{%
        \end{list} }
\makeatother

\newcommand{\BPL}{\ensuremath{BPL}\xspace}

\newcommand{\BPLplus}{\ensuremath{\BPL_{fcp}^{\not\approx}}\xspace}

\newcommand{\cresol}{\ensuremath{\mathit{\mathsf{C}}}\xspace}
\newcommand{\CSk}[1]{\mathop{\mathrm{CSk}(#1)}}
\DeclareMathOperator{\sig}{sig}

\newcommand{\foeq}{\mathrel{\approx}}
\newcommand{\foneq}{\mathrel{\not\approx}}

\newcommand{\ba}{\begin{array}}
\newcommand{\ea}{\end{array}}

\usepackage{tikz}

\begin{document}

\title{Second-Order Quantifier Elimination and Uniform Interpolation
for Bath Path Logic and the Ordered Fragment}

\author*[1]{\fnm{Renate A.} \sur{Schmidt}}

\author[2]{\fnm{Hongkai} \sur{Yin}}

\affil*[1]{\orgdiv{Department of Computer Science}, \orgname{The University of Manchester}, \orgaddress{\street{Oxford Road}, \city{Manchester}, \postcode{M13 9PL}, \country{UK}}}

\affil[2]{\orgdiv{Department of Philosophy}, \orgname{Central European University}, \orgaddress{\street{Quellenstraße 51}, \city{Wien}, \postcode{1100}, \country{Austria}}}

\abstract{%
We consider and extend results on basic path logic and
the ordered fragment of first-order logic, both of which originate from
the functional translation of modal logic.
Basic path logic is a subclass of the clausal class for the $\exists^*\forall^*$-fragment 
and has the remarkable property that binary resolution decides it.
This decidability result and the consequence finding completeness
of binary resolution allows us to observe that binary resolution
also decides uniform interpolation and computes uniform interpolants
for basic path logic.
By introducing constant Skolemisation, we show that sentences of the ordered fragment can be transformed into basic path logic, and this transformation preserves logical consequences in the ordered fragment. 
We characterise the search space of the SCAN algorithm on the clausal form of the ordered fragment by a variation of basic path logic and prove that SCAN terminates on this class,
and therefore it decides second-order quantifier elimination
for this class.
It remains unclear whether uniform interpolants in the ordered fragment can be extracted from the output of SCAN.
}

\keywords{First-order logic, Second-order quantifier elimination, Uniform interpolation, SCAN, Resolution}

\maketitle

\section{Introduction}

In this paper we investigate the application of the SCAN
algorithm~\cite{GabbayOhlbach92a} and resolution in solving
second-order quantifier elimination and computing uniform
interpolants for basic path logic~\cite{OhlbachSchmidt97,Schmidt99a}.

We say that a clause is \emph{prefix stable for variables} if for any \emph{variable} in the clause, its prefix in any argument sequence (in which it occurs) is the same~\cite{Ohlbach88b}. For example, the first clause below is prefix stable for variables and the second is not (where $y$ has two different prefixes):
\begin{align*}
    &P(x,y)\lor Q(x,y,a)\\
    &P(x,y)\lor Q(a,y,z).
\end{align*}
By \emph{basic path logic} (\BPL) we understand the clausal class in which:
(i) all argument sequences contain only variables and constants, and 
(ii) all clauses have prefix stability for variables. 
This implies \BPL is a subclass of the clausal class for the $\exists^*\forall^*$-fragment of first-order logic.

Basic path logic has been introduced as the clausal class
associated with modal logic $\ml{KD}$ based on the world path
semantics and the functional translation to first-order logic~\cite{OhlbachSchmidt97,Schmidt99a}.
Basic path logic is computationally well-behaved:
it is decidable by resolution without special refinements 
(no ordering- or selection-based restriction of inferences
is needed)~\cite{Schmidt99a}, 
which
implies that any sound and refutationally complete
ordering- and/or selection-based refinement provides a decision
procedure for this class.\footnote{%
On the side we remark, these observations do not extend to the
$\exists^*\forall^*$ clausal class which is NEXPTIME-complete, but
with a range-restriction transformation hyper-resolution
decides the $\exists^*\forall^*$ clausal class~\cite{BaumgartnerSchmidt20}.
}
Further, the performance of resolution theorem provers is superior
for modal
logic problems when they are mapped to basic path logic than when
they are mapped to the guarded fragment using the standard relational
translation~\cite{HustadtSchmidt97b,HustadtSchmidt99e}.

The \emph{ordered fragment} of first-order logic consists of those formulas in which, roughly speaking: (i) the argument sequence of each predicate is of the form $(x_1,x_2,\dots,x_n)$; and (ii)~if $i<j$, no quantifier binding $x_j$ out-scopes any quantifier binding $x_i$. For example, the following sentence belongs to the fragment:
\[\forall x_1\exists x_2(R(x_1,x_2)\land \forall x_3\neg S(x_1,x_2,x_3)).\]
The ordered fragment first appeared in the ``fluted schema'' introduced by W. V. Quine \cite{Quine1969,Quine1975}, but the formal definition we follow was given, independently of Quine's work, by A.~Herzig~\cite{Herzig1990}, where the term ``ordered formula'' was coined. 
Herzig proposed a linear-time reduction from satisfiability of ordered formulas to satisfiability in $\ml{KD}$, which implies that the former is decidable in PSPACE.
Also, it is proved in \cite{OhlbachSchmidt97} that the quantifier prefix in the prenex normal form of any ordered formula can be freely permuted without affecting satisfiability; thus, by this permutability, every ordered formula can be transformed into \BPL \cite{Schmidt97d,Schmidt99a}. Moreover, the ordered fragment is shown in~\cite{Bednarczyk2022} to have the Craig Interpolation Property.

\emph{Second-order quantifier elimination
(SOQE)} is the problem of reducing a
formula $\exists \overline X \varphi$, where~$\overline X$ are predicate
variables and $\varphi$ is a first-order formula, to an equivalent first-order
formula $\psi$ (possibly with equality) that does not contain any of the predicate
symbols~$\overline X$~\cite{GabbayOhlbach92a,GabbaySchmidtSzalas08}. For example,
\[
\exists X (Xa\land\forall x(Xx\to Bx)) \equiv Ba \quad \text{and} \quad \exists X(Xa\land\neg Xb) \equiv a\not\approx b.
\]

SOQE is an important topic of artificial intelligence
that has real-world applications in diverse areas, from knowledge
representation, knowledge sharing and communication of agents, to
answer set programming and logic, including description logics and
modal correspondence theory.
SOQE has been studied in the guise of forgetting~\cite{LinReiter94}, 
strongest necessary and weakest sufficient conditions in knowledge representation~\cite{KonevWaltherWolter09,KoopmannSchmidt13a,KoopmannSchmidt13c,Delgrande17,FereeVanDerGiessenEtAl24,DohertyLukaszewiczSzalas01},
symbol elimination
and projection~\cite{PeuterSofronieStokkermans21}.
SOQE has applications in knowledge representation for ontology
extraction, ontology comparison,
and abductive reasoning~\cite{KoopmannSchmidt13a,KoopmannSchmidt13c,ChenAlghamdiEtAl19a,LudwigKonev14,LiuLuEtAl21,DelPintoSchmidt19a,KoopmannDelPintoEtAl20},
and in artificial intelligence for
common-sense reasoning and circumscription~\cite{GabbayOhlbach92a},
databases~\cite{GabbaySchmidtSzalas08} and answer set programming
~\cite{EiterKernIsberner19,GoncalvesKnorrLeite23}. 

SOQE is tightly connected to the problem of computing \emph{uniform
interpolants}.
Given a first-order formula $\varphi$ and a subset $\Sigma$ of
$\sig(\varphi)$, the signature of~$\varphi$, a uniform interpolant
of $\varphi$ with regard to $\Sigma$ is a first-order formula $\psi$
such that $\sig(\psi)=\Sigma$ and
\[
\models\varphi\to\chi \quad \iff \quad \models\psi\to\chi \ \text{for every $\chi$ with $\sig(\varphi)\cap \sig(\chi)\subseteq \Sigma$}.
\]
Note that if $\overline{X}=\sig(\varphi)\setminus\Sigma$
and $\psi$ is the result of SOQE for $\varphi$, i.e.,
$\psi\equiv\exists\overline{X}\varphi$, then $\psi$ is a uniform
interpolant of $\varphi$ with regard to $\Sigma$.
For example, $Ba$ is a uniform interpolant of 
$Xa \wedge \forall x (Xx \rightarrow Bx)$ with regard to the signature containing only $B$ and $a$,
and $a\not\approx b$ is a uniform interpolant of $Xa \wedge \neg Xb$ with regard to the signature containing only $a$ and $b$.
Also, when we are interested in uniform interpolation in a fragment of first-order logic, the aforementioned~$\varphi$,~$\psi$ and $\chi$ are assumed to be formulas of the fragment.
In other words, the construction of uniform interpolants is dependent on the target language.
For example, a uniform interpolant of $Xa \wedge \neg Xb$
is $\top$ in first-order logic \emph{without} equality (instead of $a\not\approx b$).
However, the target language of SOQE is always assumed to be first-order logic \emph{with} equality.

There are few settings in which SOQE and uniform interpolation are known to be decidable.
For propositional logic SOQE and uniform interpolation are decidable,
but in more general settings, e.g., for many modal logics, description
logics and first-order logic, they are not even semi-decidable.

Efforts to automate modal correspondence theory led to the development
of the SOQE algorithm SCAN for first-order clause
logic~\cite{GabbayOhlbach92a,Engel96}.
Given a clause set $N$ and predicate variables $\overline X$, the SCAN algorithm attempts to eliminate the symbols
$\overline X$ in an iterative process performing all necessary
inferences upon $X$-literals and deleting clauses once they are no
longer needed.
On successful termination, the returned clause set $M$ is equivalent to
$\exists \overline X N$.
The soundness of SCAN for the SOQE problem has been proved in \cite{GabbayOhlbach92a}, 
but because of the inherent undecidability of SOQE it cannot generally be complete.
It was however shown that SCAN is complete for solving the correspondence
problem of the class of Sahlqvist axioms in modal
logic~\cite{GorankoHustadtSchmidtVakarelov04a}.
It is also known to be sound and complete for propositional problems.
Moreover, if SCAN does not terminate then the conjunction of the infinite
set of persisting clauses (closed under first-order universal
quantification and) free of predicate variables from $\overline X$ is
a solution to the problem~\cite{GabbayOhlbach92a}.

Because of the mentioned nice properties of basic path logic in resolution, the question arises
whether the SCAN algorithm can be applied to solve the
SOQE problem and compute uniform interpolants in this case.
The main research questions and contributions of this paper are the following:

\paragraph{1. Characterisation of the search space of SCAN on basic path logic with fixed constant positions.}
The SCAN algorithm based on constraint resolution replaces unification in the resolution and factoring rules by adding inequalities
(constraints) to conclusions.
These constraints fall outside the scope of the language of basic path
logic, which means the latter is insufficient to characterize the search
space of SCAN. 
For that reason,
and because \BPL is slightly more general than it needs to be for
deciding the ordered fragment and modal logic \ml{KD},
we introduce the class \BPLplus, 
in which inequalities are allowed and constants have fixed argument positions, and
show that \BPLplus is closed under the rules of SCAN. 

\paragraph{2. Termination and completeness of SCAN on \BPLplus.}
We further show that 
the set of \BPLplus-clauses over a finite relational signature is finitely
bounded modulo condensing (a form of subsumption deletion).
This implies that SCAN is a decision procedure for SOQE
and is SOQE-complete for \BPLplus.
The result can be seen to follow from the characterisation and
properties of \BPL in~\cite{Schmidt97d,Schmidt99a}, noting that
normalised conclusions of SCAN inferences are either standard
resolvents or conditionals enumerating the coincidence of pairs of
constants with the same indices.

\paragraph{3.
Consequence-preserving clausification for the ordered fragment.}
The previous transformation of the ordered fragment to clauses in \BPL or \BPLplus
involves an unusual quantifier exchange operation \cite{OhlbachSchmidt97}.
Although it is sufficient for solving the satisfiability problem,
for uniform interpolation we need a clausification that preserves
logical consequences in the ordered fragment.
To address this problem, we introduce constant Skolemisation {(CSk)}
and show that: for any sentences $\phi$ and $\psi$ of the ordered
fragment,
$\models \phi \mlimpl \psi$ iff  $\models \CSk \phi \mlimpl \psi$. Moreover, the clausification of ordered sentences using CSk also results in \BPL and \BPLplus clauses.

\paragraph{4.\ Uniform interpolants (for the ordered fragment) in basic path logic.}
Unskolemisation of a clause set returned by SCAN is always possible since the clauses in question contain only variables and constants.
Specifically, every such clause set can be expressed as a first-order formula in prenex normal form with a $\exists^*\forall^*$
quantifier prefix. However, such a formula is in general not expressible in the ordered fragment. As a partial solution to the problem, we show that clauses with inequalities between constants can be deleted without any cost on logical consequences in equality-free first-order logic, and therefore we obtain a ``uniform interpolant'' \emph{expressed in basic path logic}. It remains open whether an additional procedure can be devised to construct a uniform interpolant \emph{in} the ordered fragment.

\medskip
The rest of the paper is arranged as follows. In Section~\ref{section_BPL_resolution} we recap the definition of basic path logic and its nice properties when using resolution. These results allow us to conclude uniform interpolation completeness for \BPL. In Section~\ref{section_OF_CSk}, we recap the definition of the ordered fragment, introduce constant Skolemisation, and prove that it preserves logical consequences in the ordered fragment. We present the SCAN algorithm and some basic results in Section~\ref{section_SCAN_algorithm}, and then formulate the clausal class \BPLplus in Section~\ref{section_BPLplus}, where we prove its closure under constraint resolution and the termination and SOQE-completeness of SCAN. Section~\ref{section_UI} discusses the elimination of inequalities and the remaining problem in computing uniform interpolants for the ordered fragment.

\section{Basic Path Logic, Resolution and Uniform Interpolation}
\label{section_BPL_resolution}

Let $u_i$ denote the $i$th element in the sequence $\overline u$ of
terms.
The preﬁx of $u_i$ in $\overline u$ is $\varepsilon$ (the empty sequence) if $i = 1$,
and $(u_1, \ldots, u_{i-1})$, otherwise.

Let $T$ denote a set of sequences of terms.
We say $T$ is \emph{preﬁx stable}, if for any term~$t$, all its
occurrences in $T$ have the same prefix.
We say a clause is \emph{prefix stable} if the set of argument
sequences of the literals is prefix stable.

In \emph{basic path logic (\BPL)} only variables are required to be prefix stable. A set of sequences of terms is \emph{preﬁx stable for variables} if every occurrence of a \emph{variable} $u$ in the set has the same prefix~\cite{Schmidt97d,Schmidt99a}. Recall the example we saw above:
\begin{align*}
    &P(x,y)\lor Q(x,y,a)\\
    &P(x,y)\lor Q(a,y,z).
\end{align*}
The first clause is prefix stable for variables because every occurrence of $x$ has the prefix $\varepsilon$, and every occurrence of $y$ has the prefix $(x)$; by contrast, the second is not prefix stable for variables because the first occurrence of $y$ has the prefix $(x)$ while the second has the prefix $(a)$.

Let $\Sigma$ be a signature of finitely many predicate symbols $P,
Q, \ldots$, constants $a,b, \ldots$ and infinitely many first-order
variables $x, y, \ldots$
By definition, \BPL is the class of $\Sigma$-clauses which are prefix stable for variables.  
Specifically, for any clause $C$ in~\BPL: 
\begin{enumerate}[(i)]
\item
$C$ is a disjunction of literals of the form $(\neg)P(\overline u)$,
where each term $u_i$ in $\overline u$ is either a variable or a
constant, and
\item
for any two literals $(\neg)P(u_1, \ldots, u_m)$ and $(\neg)Q(v_1,
\ldots, v_n)$ in $C$ both the following conditions hold for variables:
\begin{description}
\item[\textbf{T1}]
If $u_i$ and $v_j$ are the same variable then $i=j$.
\item[\textbf{T2}] 
Each pair $u_k$ and $v_k$ preceding $u_i$ and $v_i$
are identical.
\end{description}
\end{enumerate}

\begin{table}[tb]
\caption{The standard inference rules of first-order resolution systems.}
\label{table_resol_inference_rules}
\begin{framed}
\begin{tabular}{@{}p{1\linewidth}@{}}
\textbf{Binary resolution}\qquad
\AxiomC{$C\lor A$}
\AxiomC{$D\lor \neg B$}
\BinaryInfC{$(C\lor D) \sigma$}
\DisplayProof
\end{tabular}
provided $\sigma$ is a most general unifier of atoms $A$ and $B$,
the two premises have no
individual variables in common, and the premises are distinct (to prevent
self-resolution).
\bigskip
\par
\begin{tabular}{@{}p{1\linewidth}@{}}
\textbf{Factoring}\qquad
\AxiomC{$C\lor (\neg)A\lor(\neg)B$}
\UnaryInfC{$(C\lor (\neg)A)\sigma$}
\DisplayProof
\end{tabular}
provided $\sigma$ is a most general unifier of atoms $A$ and $B$.
\end{framed}
\end{table}

\begin{table}[tb]
\caption{Standard redundancy elimination rules.}
\label{table_simplification_rules}
\begin{framed}
\begin{tabular}{@{}p{1\linewidth}@{}}
\textbf{Tautology deletion}\qquad
\AxiomC{$N\cup \{C\}$}
\UnaryInfC{$N$}
\DisplayProof
\end{tabular}
if $C$ is a~tautology.
\bigskip
\par
\begin{tabular}{@{}p{1\linewidth}@{}}
\textbf{Subsumption deletion}\qquad
\AxiomC{$N\cup \{C,D\}$}
\UnaryInfC{$N\cup\{C\}$}
\DisplayProof
\end{tabular}
if $C$ subsumes $D$, i.e.,\ there is a~substitution $\sigma$ such that
$C\sigma \subseteq D$.
\bigskip
\par
\begin{tabular}{@{}p{1\linewidth}@{}}
\textbf{Condensation}\qquad
\AxiomC{$N\cup \{C\}$}
\UnaryInfC{$N\cup\{\mathrm{cond}(C)\}$}
\DisplayProof
\end{tabular}
where $\mathrm{cond}(C)$ is a~minimal
(with respect to the number of literals) subclause $D$ of $C$ such
that there exists a~substitution $\sigma$
with $L\sigma\in D$ for every $L\in C$.
\end{framed}
\end{table}

\BPL has the property that prefix
stability for variables is preserved by binary resolution and factoring
defined in Table~\ref{table_resol_inference_rules} and also 
tautology deletion, subsumption deletion and
condensation~\cite{Schmidt97d,Schmidt99a}.
This implies

\begin{theorem}
\label{theorem_BPL_closed_under_resol}
\BPL is closed under resolution systems based on the rules in
Tables~\ref{table_resol_inference_rules}
and~\ref{table_simplification_rules}.
\end{theorem}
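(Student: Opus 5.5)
The plan is to verify closure rule by rule. Each rule in Tables~\ref{table_resol_inference_rules} and~\ref{table_simplification_rules} is applied to \BPL-clauses, and I will show that its output is again a \BPL-clause. That means checking two things for the output: every argument is a variable or a constant, and conditions T1 and T2 hold. Since the signature has no function symbols, every most general unifier $\sigma$ maps variables to variables or constants, so condition (i) is preserved immediately.

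For the simplification rules the check is short. Tautology deletion and subsumption deletion only remove clauses. Condensation replaces $C$ by a subclause, and T1 and T2 are pairwise conditions on literals, so they are inherited by any subset of the literals.

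The real content is a lemma about unifiers. Let $A=P(\overline u)$ and $B=P(\overline v)$ be two atoms, coming either from a single \BPL-clause (factoring) or from two variable-disjoint \BPL-clauses (resolution). Then their mgu $\sigma$ can be taken to satisfy two properties:
\begin{enumerate}[(a)]
\item it maps a variable to a term sitting at the same argument position;
\item it is consistent with prefixes, meaning $x\sigma=y\sigma$ or $x\mapsto c$ only when $x$, $y$, $c$ occur at a common index $i$ and the prefixes of $x$ and $y$ become equal under $\sigma$.
\end{enumerate}
I will prove this by induction along the positions $i=1,\dots,n$, unifying $u_i$ with $v_i$ from left to right. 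At step $i$, T2 guarantees that the prefixes $(u_1,\ldots,u_{i-1})\sigma_{i-1}$ and $(v_1,\ldots,v_{i-1})\sigma_{i-1}$ are already identical. T1 guarantees that a variable at position $i$ appears in its clause only at position $i$.

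With the lemma in hand, I apply $\sigma$ to the output clause and check T1 and T2 for each pair of literals $L\sigma$ and $M\sigma$. If a variable $z$ occurs in both, it comes from variables $x\in L$ and $y\in M$ with $x\sigma=y\sigma=z$. By (a), $x$ and $y$ sit at the same index, which gives T1. For T2, the prefix of $x$ in $L$ maps under $\sigma$ to the prefix of $z$ in $L\sigma$. When $L$ and $M$ come from the same premise, the prefixes of $x$ and $y$ agree before substitution, so they agree after. When they come from different premises, $x$ and $y$ are identified through the chain of bindings in the unified atom, and (b) says their prefixes become equal under $\sigma$.

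The main obstacle is the cross-premise case when constants are present. For example, $x\mapsto a$ in one premise may break prefix agreement for a variable in the other premise that never appeared in $A$ or $B$. To handle this, I will argue that T2 applies to any variable occurring in $L\sigma$: such a variable must have come from a variable whose prefix consisted of variables and constants also present in the unified atom, because prefix stability forces the prefix of any variable to coincide with a prefix of the atom. So the substitution acts uniformly on those prefixes. If needed I will cite the detailed treatment in~\cite{Schmidt97d,Schmidt99a}, from which the theorem follows.
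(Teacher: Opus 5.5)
Your plan is correct and follows the paper's route. The paper gives no argument at this point: it only asserts, citing \cite{Schmidt97d,Schmidt99a}, that binary resolution, factoring and the three deletion rules each preserve prefix stability for variables, so your sketch fills in what the paper cites. The core of your lemma is the right argument:
atoms are linear by T1; premises are variable disjoint (resolution) or share variables only at equal positions (factoring); so the mgu is computed position by position; and prefixes are then equalised by $A\sigma=B\sigma$.
This parallels the single-binding Lemma~\ref{L1_L2_set_binding} that the paper proves for \BPLplus. Your treatment of condensation as passing to a subclause is valid for the definition in Table~\ref{table_simplification_rules}, and is simpler than the paper's later observation that ``a condensation is a factor that subsumes it''.

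Two details in your T2 step need fixing.

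First, ``same premise, so the prefixes of $x$ and $y$ agree before substitution'' only covers $x=y$. In factoring, distinct variables of one clause can be identified. For example, factoring $Q(x,w)\lor Q(y,w')\lor R(x,w)\lor S(y,w')$ with $\sigma=\{x\mapsto y,\ w\mapsto w'\}$ puts $w'$ into both $R(y,w')$ and $S(y,w')$, although the prefixes $(x)$ and $(y)$ differed beforehand. That case must go through your (b), not through prior agreement.

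Second, the ``main obstacle'' does not exist, and the claim you propose to handle it is false as stated. In $\neg S(x)\lor P(w,y)$, the prefix $(w)$ of $y$ is not a prefix of $S(x)$. What you need instead is one uniform argument covering both rules:
\begin{itemize}
\item If $x\sigma=y\sigma$ is a variable with $x$ in $L$, $y$ in $M$ and $x\neq y$, then, say, $x=u_k$ and $y=v_k$ for some $k$. T2 inside the premise(s) makes the prefix of $x$ in $L$ equal to $(u_1,\dots,u_{k-1})$ and the prefix of $y$ in $M$ equal to $(v_1,\dots,v_{k-1})$, and $A\sigma=B\sigma$ equalises them.
\item Otherwise $x=y$, and the two prefixes were already equal before $\sigma$ was applied uniformly.
\end{itemize}
A variable not occurring in $A$ or $B$ always falls in the second case, because $\sigma$ neither binds it nor has it in its range. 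So a binding $x\mapsto a$ in one premise can never reach it.
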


\begin{theorem}[\cite{Schmidt97d,Schmidt99a}]
\label{theorem_resol_decides_BPL}
Let $N$ be any finite set of \BPL-clauses. 
\begin{enumerate}[(i)]
\item
\label{item_resol_decides_BPL}
Unrefined binary resolution with condensing (or subsumption deletion) is guaranteed to terminate on $N$, and
decides satisfiability of $N$. 
\item
The same holds for any reﬁnement of resolution with condensing (or subsumption deletion).
\end{enumerate}
\end{theorem}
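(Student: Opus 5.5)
The plan is to derive termination from a finiteness bound on condensed \BPL-clauses, and then read off decidability from refutational completeness. By Theorem~\ref{theorem_BPL_closed_under_resol}, every clause produced from $N$ by binary resolution, factoring, tautology deletion, subsumption deletion and condensation is again a \BPL-clause. It therefore suffices to show that, over the finite signature $\Sigma$ of $N$ (finitely many predicate symbols and constants), there are only finitely many condensed \BPL-clauses up to variable renaming. A fair derivation that condenses every new clause, or deletes subsumed ones, then has to saturate after finitely many steps, since each genuinely new clause is a variant-distinct element of a finite set. Once $N$ is saturated, soundness and refutational completeness of unrefined binary resolution with factoring give: $N$ is unsatisfiable iff the empty clause was derived. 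That is item~(i).

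\textbf{Bounding depth.} First bound the arity of atoms, which is at most the maximal arity $m$ in $\Sigma$. Conditions T1 and T2 say that the argument sequences in a clause $C$ form a prefix-closed tree. A variable occurs only at one fixed position $i$ and always after the same prefix. So the prefixes of all argument sequences in $C$ can be organised as a tree of depth at most $m$. Each node is labelled by a term, namely a variable or one of the finitely many constants, and every variable labels exactly one node.

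\textbf{Bounding width (the main obstacle).} Depth alone does not give finiteness, because a clause could contain unboundedly many sibling variables under the same prefix. Here condensation is used, arguing by induction on the tree from the leaves upward. Take two sibling variable nodes $y,y'$ under the same prefix. Suppose the subclauses ``below'' them are identical up to renaming of the variables occurring only there; prefix stability ensures those variables occur nowhere else. Then the substitution mapping $y'$ to $y$, together with the corresponding renaming of the variables below, maps $C$ into a proper subclause, so a condensed clause cannot contain both. Hence the number of distinct children of a node is bounded by the number of distinct ``subtree types'' below it plus the number of constants. By induction on the remaining depth, the number of types is finitely bounded, roughly a tower of exponentials in $m$, $|\Sigma|$ and the number of constants. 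So there are finitely many condensed \BPL-clauses up to renaming.

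For subsumption deletion the same bound applies, because the condensed form of a clause subsumes it. The care needed is to check that the substitution really acts independently on disjoint subtrees. This is exactly where T1 and T2 are essential: a variable cannot reappear in another branch or at another depth.

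\textbf{Item~(ii).} Any refinement of resolution, whether ordering- or selection-based, derives a subset of the conclusions of unrefined resolution. Its clauses therefore stay within the same finite set of condensed \BPL-clauses, so it also terminates. If the refinement is sound and refutationally complete, the saturated set decides satisfiability exactly as in item~(i).
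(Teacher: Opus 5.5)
Your argument for the condensing variant is correct and follows essentially the route the paper relies on, namely the cited proofs and the sketch given for \BPLplus{} in Section~\ref{section_BPLplus}. Term depth is trivially bounded, and width is bounded because condensation forbids two variant copies of a variable-indecomposable part. Your prefix tree makes this explicit. At ground prefixes, your sibling argument is the paper's step that a condensed clause has no two variant indecomposable blocks. Iterating it at every node is what actually yields the bound on variables per condensed indecomposable clause, which the paper only asserts.

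The step that does not work as written is the subsumption-deletion variant. Saying the same bound applies ``because the condensed form of a clause subsumes it'' is not enough. With subsumption deletion alone, a derived clause $D$ is kept unless some clause \emph{already in the set} subsumes it, and $\mathrm{cond}(D)$ is usually not in the set. So retained clauses need not be condensed, and your size bound does not apply to them.

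The missing observation is the converse direction. By the definition of condensation, $D\sigma\subseteq\mathrm{cond}(D)$, so $D$ and $\mathrm{cond}(D)$ subsume each other. Since $\mathrm{cond}(D)$ is a condensed \BPL-clause (Theorem~\ref{theorem_BPL_closed_under_resol}), your finiteness result shows that the \BPL-clauses over the signature of $N$ fall into finitely many subsumption-equivalence classes. Call a class \emph{covered} when some current clause subsumes its members. Then:
\begin{itemize}
\item A newly retained clause lies in an uncovered class; otherwise, by transitivity, forward subsumption would delete it.
\item Covering is never lost, because apart from tautologies (discarded on generation) a clause is removed only in favour of one that subsumes it.
\end{itemize}
Hence only finitely many clauses are ever retained, and a fair derivation terminates. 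The same repair covers the subsumption variant in item~(ii).

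Some such argument is genuinely needed, because subsumption is not a well-quasi-order on function-free clauses in general. For example, the cycle clauses $R(x_1,x_2)\lor\dots\lor R(x_k,x_1)$ for increasing $k$ form an infinite sequence in which no clause subsumes a later one. It is the finiteness of condensed \BPL-clauses that rescues termination here.
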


Theorem~\ref{theorem_resol_decides_BPL}(\ref{item_resol_decides_BPL})~also
gives us a decision procedure for consequence finding in \BPL,
because binary resolution and subsumption is in general consequence
finding complete:

\begin{theorem}[\cite{Lee67}]
Given a set of clauses $N$, for any clause $D$ entailed by $N$,
i.e., $N \models D$,
there is a clause $C$ derivable using binary resolution from $N$ such
that $C$ subsumes $D$ (which implies $\models \forall C \mlimpl \forall
D$, where $\forall C$ denotes the formula obtained by universally
quantifying all free variables in $C$).
\end{theorem}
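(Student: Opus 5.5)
This is the classical subsumption theorem of Lee; I would prove it by reduction to the ground case followed by lifting.

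\textbf{Step 1: reduce to refutation.} Let $D$ be a clause with $N\models D$. If $D$ is a tautology it is subsumed by a tautology derivable from $N$ (or one treats this degenerate case separately), so assume $D$ is not a tautology. Replace the variables of $D$ by fresh Skolem constants via a substitution $\theta$. Then $N\cup\{\neg L\theta \mid L\in D\}$ is unsatisfiable, where each $\neg L\theta$ is a ground unit clause. By Herbrand's theorem there is a finite unsatisfiable set $N_g\cup U$, where $N_g$ consists of ground instances of clauses of $N$ and $U$ consists of the negated units.

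\textbf{Step 2: ground case.} I would show by induction on the number of distinct atoms in $N_g$ (in the style of the Anderson--Bledsoe excess-literal argument) the following. If $N_g\cup\{\neg L_1,\dots,\neg L_k\}$ is unsatisfiable, then there is a ground binary resolution derivation from $N_g$ alone of a clause $C_g\subseteq\{L_1,\dots,L_k\}$. The idea is that a ground refutation of $N_g\cup U$ can be rearranged so that resolutions with the unit clauses $\neg L_i$ are postponed to the end. Deleting those final steps leaves a derivation from $N_g$ whose conclusion consists only of literals $L_i$. Factoring is automatic at the ground level because clauses are sets.

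\textbf{Step 3: lifting.} By the standard lifting lemma for binary resolution and factoring with most general unifiers, the ground derivation lifts to a derivation from $N$ of a clause $C$ with $C\tau = C_g$ for some substitution $\tau$. Hence $C\tau\subseteq D\theta$. Since the constants introduced by $\theta$ are fresh and do not occur in $C$, $\theta^{-1}$ can be applied to $\tau$ to give $\sigma$ with $C\sigma\subseteq D$. The implication $\models\forall C\mlimpl\forall D$ then follows immediately from subsumption.

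The main obstacle is Step 2: justifying the reordering of a ground refutation so that the unit literals are resolved last without losing completeness. I would handle it by a direct induction on atoms rather than by proof transformation. Pick an atom $A$ occurring in $N_g$ but not among the $L_i$ and split on it, as in semantic-tree or Davis--Putnam reasoning. If every atom of $N_g$ is among the $\pm L_i$, then the clauses in $N_g$ whose literals all lie in $\{L_i\}$ must already witness the unsatisfiability.

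A smaller point in Step 3 is that lifting requires factoring in the non-ground derivation. The theorem says ``using binary resolution''; I would read this as including factoring, as in the table of rules.
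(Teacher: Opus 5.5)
The paper gives no proof of this statement; it only cites Lee. Your route is the standard proof of Lee's theorem and it goes through: Skolemise $D$ with fresh constants, apply Herbrand's theorem, prove a ground subsumption lemma by splitting on atoms not occurring in $D\theta$, lift with factoring, and finally replace the fresh constants by the variables of $D$. In Step~2 you leave the recombination implicit. The induction-hypothesis derivations from the two restricted sets replay in $N_g$ as derivations of $C_1$ or $C_1\lor\neg A$, and of $C_2$ or $C_2\lor A$, where $C_1,C_2\subseteq D\theta$. One resolution on $A$ then finishes, so this part is routine.

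Two points need correcting or making explicit.

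\emph{Tautologies.} Your aside that a tautological $D$ ``is subsumed by a tautology derivable from $N$'' is false. Take $N=\{P(a)\}$ and $D=Q(b)\lor\neg Q(b)$: then $N\models D$, but the only derivable clause is $P(a)$. The statement, as worded in the paper too, is simply false for tautologies. Lee's theorem reads ``$D$ is a tautology or some derivable clause subsumes $D$''. Your proof uses the non-tautology hypothesis exactly in the base case of the induction, where you take the assignment falsifying every literal of $D\theta$.

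\emph{Self-resolution.} Besides factoring, your lifting step resolves two variable-disjoint copies of the same clause of $N$ whenever two ground instances of one input clause are resolved. The proviso in Table~\ref{table_resol_inference_rules} that the premises be distinct must therefore be read as permitting this. The point is not cosmetic. The set $\{\neg P(x)\lor P(f(x))\}$ entails $\neg P(x)\lor P(f(f(x)))$, yet no clause derivable without resolving that clause against a renamed copy of itself subsumes it.
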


This means that binary resolution computes strongest consequences modulo
subsumption, which in turn implies that \BPL has the uniform
interpolation property and the approach is uniform interpolation
complete for \BPL:

\begin{theorem}\label{UI_BPL}
Binary resolution and subsumption deletion decides uniform
interpolation for \BPL and computes \BPL-uniform interpolants.
\end{theorem}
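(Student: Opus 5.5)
The plan is to obtain the uniform interpolant as the finite saturation of the input restricted to the target signature. Let $N$ be a finite set of \BPL-clauses and $\Sigma \subseteq \sig(N)$.

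\textbf{Step 1 (saturation).} By Theorem~\ref{theorem_resol_decides_BPL}(\ref{item_resol_decides_BPL}), unrefined binary resolution with subsumption deletion terminates on $N$. It yields a finite saturated set $N^*$. By Theorem~\ref{theorem_BPL_closed_under_resol}, every clause in $N^*$ is again in \BPL.

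\textbf{Step 2 (candidate).} Let $M$ be the set of clauses of $N^*$ whose signature lies within $\Sigma$. I take $\psi = \bigwedge_{C\in M}\forall C$ as the candidate interpolant. The direction $\models \varphi\to\chi \Leftarrow \models\psi\to\chi$ is immediate, since resolution is sound and hence $N\models M$.

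\textbf{Step 3 (completeness).} Suppose $N\models\chi$, where $\chi$ is a \BPL clause set with $\sig(\chi)\cap\sig(N)\subseteq\Sigma$. It suffices to treat each clause $D$ of $\chi$ separately.

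\begin{itemize}
\item First strip from $D$ the literals whose predicate does not occur in $N$. Such predicates can be interpreted arbitrarily, say made false for positive literals and true for negative ones. Hence $N\models D'$ for the remaining subclause $D'$, and $\sig(D')\subseteq\Sigma$.
\item By Lee's theorem, some clause $C$ derivable from $N$ subsumes $D'$. Subsumption deletion only replaces clauses by subsuming ones, so I may take $C$ to be subsumed by, or equal to, a member of $N^*$. Equivalently, assume $C\in N^*$ up to subsumption.
\item Since $C\sigma\subseteq D'$, every predicate of $C$ occurs in $D'$. Constants could be a nuisance here: a constant of $C$ need not survive under $\sigma$ only if it occurs in $C\sigma\subseteq D'$, and constants are not instantiated, so every symbol of $C$ lies in $\Sigma$. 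Thus $C\in M$, and $\psi\models D'\models D$.
\end{itemize}

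The same argument works for $\chi$ given as a formula whose clausal form is in \BPL, after Skolemising the negation appropriately. Here the Skolem constants are fresh, so the signature condition is preserved. Finally, to meet the requirement $\sig(\psi)=\Sigma$ exactly, conjoin trivial tautologies if necessary.

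\textbf{Decidability.} Saturation is finite, so $M$ is computed effectively. This gives a decision procedure that always yields a \BPL-uniform interpolant, and in particular \BPL has the uniform interpolation property.

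\textbf{Main obstacle.} I expect the hardest point to be the signature bookkeeping in Step 3. The delicate part is justifying that Lee's theorem remains valid modulo deletion of subsumed clauses during saturation. The other delicate part is handling consequences $\chi$ that are not single clauses; there one needs the Skolemisation of $\neg\chi$ to stay within the admissible language. I would try to resolve both by working purely clause-wise and noting that subsumption is transitive.
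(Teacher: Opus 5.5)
Your skeleton is the argument the paper compresses into the sentence before the theorem: finite saturation inside \BPL by Theorems~\ref{theorem_resol_decides_BPL} and~\ref{theorem_BPL_closed_under_resol}, restriction to $\Sigma$, then Lee's theorem. However, the step that feeds Lee's theorem fails as written, and one of your side claims is false.

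\textbf{The stripping step.} Making a foreign predicate empty or full falsifies its literals only if it occurs in $D$ with a single polarity. So your justification does not cover a $D$ containing both $P(\overline s)$ and $\neg P(\overline t)$. For $D=P(x)\mlor\neg P(x)$ with $P\notin\sig(N)$ the conclusion is actually false: $N\models D$, but $D'$ is the empty clause, so $N\models D'$ fails for satisfiable $N$.

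The step is also unnecessary; apply Lee's theorem to $D$ itself. If $C\sigma\subseteq D$ with $C$ derived from the function-free set $N$, then every predicate and constant of $C$ occurs in $N$ and in $D$. Resolution creates no new symbols, and substitutions remove no constants. Hence $\sig(C)\subseteq\sig(N)\cap\sig(\chi)\subseteq\Sigma$. The same holds for the clause of $N^*$ that subsumes $C$, which is the clause you actually need in $M$.

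Tautological $D$ must be set aside; they are trivially entailed by $\psi$. Lee's subsumption theorem in its correct form has exactly this exception, which the version quoted in the paper drops.

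\textbf{Passing to $N^*$.} Going from derivability to subsumption by a member of $N^*$ needs more than transitivity of subsumption. You need the lifting lemma: a resolvent or factor of clauses subsumed by $D_1,D_2$ is subsumed by $D_1$, by $D_2$, or by a resolvent of factors of $D_1,D_2$. Then argue by induction on the derivation, using that $N^*$ is saturated. The paper leaves this implicit as well.

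\textbf{The extension to non-clausal $\chi$.} Drop the claim that the argument extends to formulas $\chi$ whose clausal form is in \BPL. Skolemising the consequent does not preserve entailment, and the claim is false once constants are forgotten. Take $N=\{P(a)\}$ and $\Sigma=\{P\}$: your $M$ is empty, yet $N\models\exists x\,P(x)$, whose clausal form $P(c)$ is a \BPL clause. The theorem concerns $\chi$ ranging over \BPL clause sets, and there your clause-wise treatment, repaired as above, is all that is needed.
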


\section{The Ordered Fragment and Constant Skolemisation}
\label{section_OF_CSk}

Let $X_m = \{ x_1, \ldots, x_m\}$ denote an ordered set of variables.
A formula in the \emph{ordered fragment} is an \emph{ordered formula over $X_n$} for
some $n \geq 0$, defined inductively as follows:
\begin{enumerate}[(i)]
\item For any $n$-ary predicate symbol $P$, $P(x_1, \ldots, x_n)$ is an
ordered formula over~$X_n$.
\item
Any Boolean combination of ordered formulas over $X_n$
is an ordered formula over~$X_n$.
\item
If $\phi$ is an
ordered formula over~$X_{n+1}$, then
$\exists x_{n+1} \phi$  and   $\forall x_{n+1} \phi$
are ordered formulas over $X_n$.
\end{enumerate}
Recall the example we saw before:
\[\forall x_1\exists x_2(R(x_1,x_2)\land \forall x_3\neg S(x_1,x_2,x_3)).\]
We can easily check that it is an ordered formula over $X_0$.

We observe that the free variables in ordered formulas over~$X_n$ are exactly $x_1,\dots,x_n$, and, in particular, ordered formulas over $X_0$ are closed. We thus refer to ordered formulas over $X_0$ as \emph{ordered sentences}.

Let $\psi$ be a first-order formula in prenex normal form. 
We use $\Upsilon \psi$ to denote the result of moving universal quantifiers to the front, i.e., the quantifier prefix in $\Upsilon \psi$ is of the form $\forall^*\exists^*$.
The operator $\Upsilon$ was called the \emph{quantifier exchange operator} in \cite{OhlbachSchmidt97}.
Although, in general, $\Upsilon \psi$ is not logically equivalent to
$\psi$, we have the following result. 
\begin{theorem}[\cite{OhlbachSchmidt97}]\label{thm:sat_invariance}
Let $\psi$ be the prenex normal form of an ordered sentence. Then
$\models \Upsilon \psi$ iff $\models \psi$, and 
$\neg \Upsilon \psi$ is satisfiable iff $\neg \psi$ is satisfiable.
\end{theorem}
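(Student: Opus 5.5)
The plan is to prove the second claim directly and derive the first from it by duality. Let $\psi = Q_1 x_1 \cdots Q_n x_n\, \mu$, where $\mu$ is quantifier-free. Since $\psi$ comes from an ordered sentence, every atom of $\mu$ has the form $P(x_1,\ldots,x_k)$ for some $k \le n$. I assume the prenex form keeps the variable order, i.e.\ $x_i$ is bound at position $i$ of the prefix.

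\textbf{Easy direction.} Pulling universal quantifiers outward past existential ones only weakens a formula: $\exists y \forall z\, \theta \models \forall z \exists y\, \theta$. Applying this step by step gives $\psi \models \Upsilon\psi$. Hence $\models\psi$ implies $\models\Upsilon\psi$. Contrapositively, if $\neg\Upsilon\psi$ is satisfiable then so is $\neg\psi$.

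\textbf{Hard direction: satisfiability of $\neg\psi$ transfers to $\neg\Upsilon\psi$.} Write $\neg\psi \equiv \overline{Q}_1 x_1\cdots\overline{Q}_n x_n\,\neg\mu$ with the quantifiers dualised. Up to logical equivalence, $\neg\Upsilon\psi$ is this formula with all existential quantifiers moved to the front, keeping their relative order. Take a model $\mathcal M$ with domain $D$ of $\neg\psi$. For each existential position $i$, fix a Skolem function $f_i : D^{i-1}\to D$; it formally depends on all earlier positions, and I take it to ignore earlier existential ones.

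Build $\mathcal M'$ as follows.
\begin{itemize}
\item Its domain is $D' = D \cup \bigcup_{j\ge 0} D^{D^{j}}$, that is, points of $D$ together with functions on tuples over $D$.
\item A tuple $(e_1,\ldots,e_k)$ over $D'$ is decoded recursively into $(d_1,\ldots,d_k)\in D^k$. Put $d_j = e_j$ if $e_j\in D$. Put $d_j = e_j(d_1,\ldots,d_{j-1})$ if $e_j$ is a function of arity $j-1$. Put $d_j$ equal to a fixed default point otherwise.
\item Interpret $P^{\mathcal M'}(e_1,\ldots,e_k)$ iff $P^{\mathcal M}(d_1,\ldots,d_k)$.
\end{itemize}
Decoding uses only the prefix $(e_1,\ldots,e_k)$. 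Because every atom's argument sequence is exactly $x_1,\ldots,x_k$, this is a consistent interpretation. This is the crucial point where orderedness is used.

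\textbf{Verification.} Witness each front existential $x_i$ in $\neg\Upsilon\psi$ by $f_i\in D'$. Let the universal positions take arbitrary values $e_j\in D'$. The decoded sequence $(d_1,\ldots,d_n)$ has arbitrary $d_j\in D$ at universal positions, and $d_i = f_i(d_1,\ldots,d_{i-1})$ at existential positions. This is exactly a branch along which $\mathcal M \models \neg\mu[d_1,\ldots,d_n]$. By the definition of $\mathcal M'$, every atom of $\neg\mu$ has the same truth value under $(e_j)$ in $\mathcal M'$ as under $(d_j)$ in $\mathcal M$. Therefore $\mathcal M' \models \neg\Upsilon\psi$.

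Combining both directions gives the equisatisfiability claim. Negating it yields $\models\Upsilon\psi$ iff $\models\psi$.

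\textbf{Main obstacle.} The delicate step is the decoding construction. The function elements must simulate Skolem dependence while being single constants. The truth of each atom $P(x_1,\ldots,x_k)$ must depend only on the decoded prefix; this is where fixed positions and the order of variables in atoms are essential. I also need to check that universal variables ranging over function elements cause no harm. They do not, because such values are simply decoded to some point of $D$, and the universal condition in $\mathcal M$ covers every point.
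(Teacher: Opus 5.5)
Your easy direction is fine, and for prenex forms of the shape you assume, the function-element construction is correct. The gap is that assumption itself. A prenex form of an ordered sentence $\varphi$ is in general not $Q_1x_1\cdots Q_nx_n\,\mu$ with every atom of the form $P(x_1,\ldots,x_k)$. Whenever an index is bound by two quantifier occurrences in different subformulas, prenexing must rename variables apart. The bound variables then form a prefix-stable tree (conditions T1/T2) rather than a chain.

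For example, $\varphi=\exists x_1(\exists x_2\neg S(x_1,x_2)\wedge\forall x_2\neg R(x_1,x_2))$ has the prenex form $\psi=\exists x_1\exists z\forall y\,(\neg S(x_1,z)\wedge\neg R(x_1,y))$, so $\neg\psi\equiv\forall x_1\forall z\exists y\,(S(x_1,z)\vee R(x_1,y))$. A Skolem function for $y$ read off this prefix takes $(x_1,z)$, where $z$ belongs to a different branch. But $y$ sits at argument position~$2$, so your decoding can only apply $e_y$ to $d_1$. The function-elements then no longer reproduce a Skolem branch, and the step concluding $\mathcal M\models\neg\mu[d_1,\ldots,d_n]$ has no justification. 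For a chain it is automatic that a witness depends only on the variables in its own argument prefix; in general this is exactly what needs an argument.

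To close the gap, choose witnesses from the nested structure of $\neg\varphi$ rather than from the prefix. Put $\neg\varphi$ in negation normal form, where each subformula $\exists x_j\gamma$ has free variables exactly $x_1,\ldots,x_{j-1}$. Let $f_y(\bar d)$ be a witness for $\gamma$ at $\bar d$ whenever one exists. A routine induction on the formula shows the matrix holds whenever universal variables take arbitrary values and each existential $y$ takes $f_y$ applied to the values of its argument prefix. Decoding each atom along its argument positions is consistent because every variable has a unique prefix. Your verification then goes through, provided you take the union defining $D'$ to be disjoint.

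The paper itself does not prove this theorem but cites \cite{OhlbachSchmidt97}. However, its Lemma~\ref{lemma_cons_preserv} applied to $\neg\varphi$ yields your hard direction, since the Skolemisation of $\neg\Upsilon\psi$ is, up to logical equivalence, $\CSk{\neg\varphi}$. That construction maps each Skolem constant by $h$ to a witness relative to the preceding tuple. It is essentially your decoding made bijective, with an ordered bisimulation and conditions C1/C2 handling arbitrary nesting. Your decoding relation $\bar e\mapsto\bar d$ is itself an ordered bisimulation between $\mathcal M'$ and $\mathcal M$ (the back condition holds because $D\subseteq D'$), so you could also finish along those lines.
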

Notice that by pushing the negation inwards in $\neg \Upsilon \psi$ we get a $\exists^*\forall^*$-formula.
Moreover, transforming $\neg \Upsilon\psi$ into clausal form
results in a clause set in \BPL~\cite{Schmidt97d,Schmidt99a}.
Thus, satisfiability in the ordered fragment can also be decided by resolution without special refinements.

However, if we are to use SCAN or resolution to compute uniform interpolants for ordered sentences, the invariance of satisfiability in Theorem~\ref{thm:sat_invariance} is insufficient. We need to show that by transforming an ordered sentence into a set of clauses in \BPL, its logical consequences in the ordered fragment are not affected. For this purpose, we introduce \emph{constant Skolemisation} (for ordered sentences) and prove that it preserves logical consequences in the ordered fragment.

In constant Skolemisation, we introduce new \emph{constants} for all eliminated quantifiers (so no functional terms are added). The definition below is adapted from that of \emph{inner Skolemisation}~\cite{Nonnengart1996,NonnengartWeidenbach01} presented in \cite[\S3.3.5]{GabbaySchmidtSzalas08}. 
Let $\phi$ be a first-order formula in which no occurrence of subformulas of $\phi$ is of zero polarity. The \emph{constant Skolemisation of $\phi$}, written $\CSk{\phi}$, is the result of exhaustive application of the following operation.
\begin{enumerate}
    \item If $\delta$ is an occurrence of a subformula of the form $\exists x\gamma$ and is of positive polarity, pick a fresh constant $c$ and replace $\delta$ by $\gamma(x/c)$, the result of substituting $c$ for $x$ in $\gamma$.
    \item If $\delta$ is an occurrence of a subformula of the form $\forall x\gamma$ and is of negative polarity, pick a fresh constant $c$ and replace $\delta$ by $\gamma(x/c)$, the result of substituting $c$ for $x$ in $\gamma$.
\end{enumerate}
For example,
\begin{align*}
    & \CSk{\forall x\exists y\forall z P(x,y,z)} = \forall x\forall z P(x,c,z)\\
    & \CSk{\exists x\forall y (R(x,y)\lor \neg\forall z S(x,y,z))} = \forall y (R(c_1,y)\lor \neg S(c_1,y,c_2)).
\end{align*}
During the construction of $\CSk{\phi}$, if the quantifier in a subformula occurrence $\delta$ in $\phi$ is eliminated by introducing a constant~$c$, we say that $c$ is \emph{the Skolem constant associated with~$\delta$}. Thus, in the second example, $c_1$ is the Skolem constant associated with $\exists x\forall y (R(x,y)\lor \neg\forall z S(x,y,z))$, and $c_2$ is the one associated with $\forall z S(x,y,z)$.

In the rest of this section we show that logical consequences in the ordered fragment is invariant under constant Skolemsation.
\begin{lemma}\label{lemma_monotonicity}
Let $\phi$ be a first-order sentence in which no subformula occurrence is of zero polarity, and $\CSk{\phi}$ its constant Skolemisation. Then: $\models \CSk{\phi}\to\phi$.
\end{lemma}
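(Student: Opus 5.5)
We prove the statement by induction on the number of single Skolemisation steps in the construction of $\CSk{\phi}$, showing that each step produces a formula that entails its predecessor. Write $\phi=\phi_0,\phi_1,\dots,\phi_k=\CSk{\phi}$, where $\phi_{i+1}$ arises from $\phi_i$ by replacing one subformula occurrence $\delta$ by $\gamma(x/c)$. It suffices to show $\models\phi_{i+1}\to\phi_i$ for each $i$; transitivity of implication then gives $\models\CSk{\phi}\to\phi$.

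Two facts drive the single step.
\begin{enumerate}[(a)]
\item \emph{Local implication.} If $\delta=\exists x\gamma$ occurs positively, then $\models\gamma(x/c)\to\exists x\gamma$; this is existential generalisation, and it holds even if $\gamma$ has free variables other than $x$, since $c$ is a constant and hence free for $x$. If $\delta=\forall x\gamma$ occurs negatively, then $\models\forall x\gamma\to\gamma(x/c)$, by universal instantiation. In both cases, write $\delta'=\gamma(x/c)$ for the replacement.
\item \emph{Monotonicity.} Let $\chi[\cdot]$ be a context in which the hole has positive (resp.\ negative) polarity, and suppose $\models\forall\overline{y}(\alpha\to\beta)$ (resp.\ $\models\forall\overline{y}(\beta\to\alpha)$), where $\overline{y}$ lists the free variables. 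Then $\models\chi[\alpha]\to\chi[\beta]$.
\end{enumerate}

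Fact (b) is proved by induction on the structure of the context. The cases are $\neg$, which flips polarity; $\wedge$ and $\vee$, which preserve it; $\to$, which flips it in the antecedent; and $\forall$, $\exists$, which preserve it. This is where the hypothesis that no subformula occurrence has zero polarity is used. It excludes contexts such as $\leftrightarrow$, in which the hole is neither monotone nor antitone, so every node on the path from the root to $\delta$ is one of the connectives just listed. The implications in (a) hold universally in the free variables of $\gamma$, which is exactly what is needed to pass through quantifiers binding those variables in the context.

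For the single step, combine (a) and (b). If $\delta$ is a positive occurrence of $\exists x\gamma$, then $\models\forall\overline{y}(\delta'\to\delta)$ with the hole positive, so $\models\phi_{i+1}\to\phi_i$. If $\delta$ is a negative occurrence of $\forall x\gamma$, then $\models\forall\overline{y}(\delta\to\delta')$ with the hole negative, so again $\models\phi_{i+1}\to\phi_i$. We also note that $\phi_{i+1}$ still has no zero-polarity subformula occurrences: the replacement only removes a quantifier, so the polarities of all remaining occurrences are unchanged. Hence the induction hypothesis applies at every stage.

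The main obstacle is the polarity-monotonicity lemma (b) under quantifiers in the presence of free variables. The argument must use that (a) holds for all assignments to $\overline{y}$, and for this it is essential that $c$ is a constant rather than a term depending on the bound variables. Freshness of $c$ is not actually needed for this direction; it matters only for the converse results later in the paper.
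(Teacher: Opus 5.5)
Your proof is correct and follows essentially the same route as the paper: the paper also combines the local implications $\models\gamma(x/c)\to\exists x\gamma$ and $\models\forall x\gamma\to\gamma(x/c)$ with polarity-based monotonicity. The one difference is that the paper cites this monotonicity principle as the Implicational Replacement Theorem of Fitting (Thm.~8.2.4), whereas your fact (b) proves it directly by induction on the context, with the same step-by-step decomposition implicit in the paper's ``during the construction of $\CSk{\phi}$''.
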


\begin{proof}
During the construction of $\CSk{\phi}$, if an occurrence of a subformula $\exists x\gamma$ with positive polarity is replaced by $\gamma(x/c)$, we have $\models \gamma(x/c)\to \exists x\gamma$; if an occurrence of a subformula $\forall x\gamma$ with negative polarity is replaced by $\gamma(x/c)$, we have $\models\forall x\gamma \to \gamma(x/c)$. The result then follows by the Implicational Replacement Theorem (see \cite[Thm. 8.2.4]{Fitting1996}).
\end{proof}

Therefore, for any ordered sentences $\phi$ and $\psi$, if $\models\phi\to\psi$ then $\models\CSk{\phi}\to\psi$. To show that logical consequences are also preserved in the other direction, i.e., if $\models\CSk{\phi}\to\psi$ then $\models\phi\to\psi$, we need a characterisation of the expressive power of the ordered fragment. The bisimulation for the ordered fragment, or \emph{ordered bisimulation}, is developed in \cite{Bednarczyk2022} in a more general setting. In the following we offer a definition adapted for the present work.
 
Given structures $\mathfrak{A}$ and $\mathfrak{B}$ interpreting the same signature, an ordered bisimulation between $\mathfrak{A}$ and $\mathfrak{B}$ is a binary relation $Z\subseteq \bigcup_{n\geq 0}(A^n\times B^n)$ such that for any $\bar{a}\in A^n$, $\bar{b}\in B^n$ with $\bar{a} Z \bar{b}$, the following conditions hold:
\begin{enumerate}
    \item for any $n$-ary predicate $P$ in the signature, $\bar{a}\in P^\mathfrak{A}$ iff $\bar{b}\in P^\mathfrak{B}$;
    \item for any $a\in A$ there is $b\in B$ such that $\bar{a}aZ\bar{b}b$;
    \item for any $b\in B$ there is $a\in A$ such that $\bar{a}aZ\bar{b}b$.
\end{enumerate}

\begin{example}
Let $P,Q$ be unary and $R$ binary, and let $\mathfrak{A}$ be the $\{P,Q,R\}$-structure such that:
\begin{align*}
A& =\{a_1,a_2\};\\
P^{\mathfrak{A}}& =\{a_1,a_2\}, \qquad Q^{\mathfrak{A}}=\{a_2\}; \\
R^{\mathfrak{A}}& =\{(a_1,a_1),(a_1,a_2),(a_2,a_1)\}.
\intertext{%
Let $\mathfrak{B}$ be the $\{P,Q,R\}$-structure such that:}
B& =\{b_1,b_2\};\\
P^{\mathfrak{B}}& =\{b_1,b_2\}, \qquad Q^{\mathfrak{B}}=\{b_1\};\\
R^{\mathfrak{B}}& =\{(b_1,b_2),(b_2,b_1),(b_2,b_2)\}.
\end{align*}
In the following diagram, the dashed lines represent an ordered bisimulation between $\mathfrak{A}$ and~$\mathfrak{B}$, restricted to tuples of length at most $2$ ($\varepsilon$ denotes the empty tuple).
\begin{center}
\begin{tikzpicture}[minimum size=5mm,inner sep=0]
\node[circle,draw] (e) [label=below:$\scriptstyle \varepsilon$] {};
\node[circle,draw] (a) at (-1,-1) [label=below:$\scriptstyle a_1$] {$\scriptscriptstyle P$};
\node[circle,draw] (b) at (1,-1) [label=below:$\scriptstyle a_2$] {$\scriptscriptstyle P,Q$};
\node[circle,draw] (aa) at (-1.5,-2) [label=below:$\scriptstyle a_1a_1$] {$\scriptscriptstyle R$};
\node[circle,draw] (ab) at (-0.5,-2) [label=below:$\scriptstyle a_1a_2$] {$\scriptscriptstyle R$};
\node[circle,draw] (ba) at (0.5,-2) [label=below:$\scriptstyle a_2a_1$] {$\scriptscriptstyle R$};
\node[circle,draw] (bb) at (1.5,-2) [label=below:$\scriptstyle a_2a_2$] {$\scriptscriptstyle $};

\node[circle,draw] (e') at (4.5,0) [label=below:$\scriptstyle \varepsilon$] {};
\node[circle,draw] (a') at (3.5,-1) [label=below:$\scriptstyle b_1$] {$\scriptscriptstyle P,Q$};
\node[circle,draw] (b') at (5.5,-1) [label=below:$\scriptstyle b_2$] {$\scriptscriptstyle P$};
\node[circle,draw] (aa') at (3,-2) [label=below:$\scriptstyle b_1b_1$] {$\scriptscriptstyle $};
\node[circle,draw] (ab') at (4,-2) [label=below:$\scriptstyle b_1b_2$] {$\scriptscriptstyle R$};
\node[circle,draw] (ba') at (5,-2) [label=below:$\scriptstyle b_2b_1$] {$\scriptscriptstyle R$};
\node[circle,draw] (bb') at (6,-2) [label=below:$\scriptstyle b_2b_2$] {$\scriptscriptstyle R$};

\path[-] (e) edge (a);
\path[-] (e) edge (b);
\path[-] (a) edge (aa);
\path[-] (a) edge (ab);
\path[-] (b) edge (ba);
\path[-] (b) edge (bb);

\path[-] (e') edge (a');
\path[-] (e') edge (b');
\path[-] (a') edge (aa');
\path[-] (a') edge (ab');
\path[-] (b') edge (ba');
\path[-] (b') edge (bb');

\path[-] (e) edge[bend left=20,dashed] (e');
\path[-] (a) edge[bend left=20,dashed] (b');
\path[-] (b) edge[bend left,dashed] (a');
\path[-] (aa) edge[bend right=40,dashed] (ba');
\path[-] (ab) edge[bend right=40,dashed] (bb');
\path[-] (ba) edge[bend right=40,dashed] (ab');
\path[-] (bb) edge[bend right,dashed] (aa');
\end{tikzpicture}
\end{center}

\end{example}

By a pointed structure we understand a pair $(\mathfrak{A},\bar{a})$, where $\mathfrak{A}$ is a structure and $\bar{a}$ is a tuple of elements of $\mathfrak{A}$.
We say that pointed structures $(\mathfrak{A},\bar{a})$ and $(\mathfrak{B},\bar{b})$ are \emph{ordered bisimilar}, written $(\mathfrak{A},\bar{a})\sim(\mathfrak{B},\bar{b})$, if there is an ordered bisimulation between $\mathfrak{A}$ and~$\mathfrak{B}$ and, in addition, $\bar{a}Z\bar{b}$. 
We write $\mathfrak{A}\sim\mathfrak{B}$ if $(\mathfrak{A},\varepsilon)\sim(\mathfrak{B},\varepsilon)$ holds, where $\varepsilon$ is the empty tuple.

\begin{lemma}\label{lemma_bisimulation}
Let $(\mathfrak{A},\bar{a})$ and $(\mathfrak{B},\bar{b})$ be pointed structures with $|\bar{a}|=|\bar{b}|=n$. If $(\mathfrak{A},\bar{a})\sim(\mathfrak{B},\bar{b})$ then $(\mathfrak{A},\bar{a})$ and $(\mathfrak{B},\bar{b})$ agree on all ordered formulas over $X_n$, i.e., for any ordered formula $\phi$ over $X_n$, $\mathfrak{A}\models\phi[\bar{a}]$ iff $\mathfrak{B}\models\phi[\bar{b}]$.
\end{lemma}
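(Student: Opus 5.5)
We argue by structural induction on ordered formulas, simultaneously for all $n$: for every ordered formula $\phi$ over $X_n$ we show that whenever $\bar{a}Z\bar{b}$ with $|\bar{a}|=|\bar{b}|=n$ for some ordered bisimulation $Z$ between $\mathfrak{A}$ and $\mathfrak{B}$, we have $\mathfrak{A}\models\phi[\bar{a}]$ iff $\mathfrak{B}\models\phi[\bar{b}]$. Fixing one bisimulation $Z$ throughout is convenient, since the back-and-forth clauses produce new pairs that again lie in $Z$. The induction is on the construction given in the definition of the ordered fragment (clauses (i)--(iii)). The formula $\phi$ ranges over different $X_n$ as we descend, so the statement must be quantified over all $n$ and all $Z$-related pairs of $n$-tuples before inducting.

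\emph{Atomic case.} Here $\phi=P(x_1,\ldots,x_n)$ with $P$ of arity $n$, and the claim is exactly condition~1 of an ordered bisimulation applied to $\bar{a}Z\bar{b}$. It matters that the argument sequence is literally $(x_1,\ldots,x_n)$ and the free variables of an ordered formula over $X_n$ are exactly $x_1,\ldots,x_n$. Hence the assignment $x_i\mapsto a_i$ evaluates $P(x_1,\ldots,x_n)$ to $\bar{a}\in P^{\mathfrak{A}}$, and likewise in $\mathfrak{B}$.

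\emph{Boolean case.} If $\phi$ is a Boolean combination of ordered formulas over $X_n$, the claim follows directly from the induction hypothesis for the components, which are evaluated at the same pair $\bar{a}Z\bar{b}$.

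\emph{Quantifier case.} Let $\phi=\exists x_{n+1}\psi$ with $\psi$ ordered over $X_{n+1}$. Suppose $\mathfrak{A}\models\phi[\bar{a}]$. Then there is $a\in A$ with $\mathfrak{A}\models\psi[\bar{a}a]$. By condition~2 there is $b\in B$ with $\bar{a}aZ\bar{b}b$. The induction hypothesis for $\psi$ at length $n+1$ gives $\mathfrak{B}\models\psi[\bar{b}b]$, hence $\mathfrak{B}\models\phi[\bar{b}]$. The converse direction is symmetric, using condition~3. The case $\forall x_{n+1}\psi$ follows by duality, as $\neg\exists x_{n+1}\neg\psi$, or by the analogous direct argument.

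The only step requiring care, and thus the main (mild) obstacle, is the quantifier case. The induction must be set up so that the hypothesis applies to all pairs in $Z$ of length $n+1$, not just the pair under consideration. We must also check that the assignment extending $\bar{a}$ by $a$ correctly interprets $x_{n+1}$; this holds because the quantified variable is always the next index $n+1$. This index discipline is exactly what makes tuple-indexed bisimulation match the semantics.
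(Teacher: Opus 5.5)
Your proposal is correct and takes the same approach as the paper, which just says the result is a simple structural induction on ordered formulas. Your version spells that induction out properly: you quantify over all $n$ and all $Z$-related pairs before inducting, and you use conditions 2 and 3 for the two directions of the quantifier case.
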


\begin{proof}
This is simple structural induction on ordered formulas.
\end{proof}

We use the following trick, taken from \cite[pp. 168--169]{Pratt2023}, to inflate models of equality-free first-order formulas. We assume that the signature in question is relational. Let $\mathfrak{A}$ be a structure (with domain $A$) and $n$ a positive integer. Then $\mathfrak{A}'$ is defined to be the structure over the domain $A\times \{1,\dots,n\}$ such that for any $m$-ary predicate $R$, any $a_1,\dots,a_m\in A$, and any $i_1,\dots,i_m\in\{1,\dots,n\}$, 
$$(\langle a_1,i_1\rangle,\dots,\langle a_m,i_m\rangle)\in R^{\mathfrak{A}'} \text{ iff } (a_1,\dots,a_m)\in R^{\mathfrak{A}}.$$
We call $\mathfrak{A}'$ the \emph{$n$-fold Cartesian product of $\mathfrak{A}$}.
Intuitively, $\mathfrak{A}'$ is the result of duplicating each element of $\mathfrak{A}$ for $n-1$ times. 

\begin{lemma}\label{lemma_inflation}
Let $\mathfrak{A}$ be a structure, $n$ a positive integer, and $\mathfrak{A}'$ the $n$-fold Cartesian product of $\mathfrak{A}$. Then, for any $a_1,\dots,a_m\in A$, any $i_1,\dots,i_m\in\{1,\dots,n\}$, and any equality-free first-order formula $\phi(x_1,\dots,x_m)$, we have that $\mathfrak{A}\models\phi[a_1,\dots,a_m]$ iff $\mathfrak{A}'\models\phi[\langle a_1,i_1\rangle,\dots,\langle a_m,i_m\rangle]$.
\end{lemma}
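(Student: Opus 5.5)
The plan is to prove the stronger statement by structural induction on the equality-free formula $\phi$, for all tuples simultaneously. Let $\pi : A\times\{1,\dots,n\}\to A$ be the projection $\pi(\langle a,i\rangle)=a$. The goal is that for every equality-free $\phi(x_1,\dots,x_m)$ and every assignment $\bar{e}=(e_1,\dots,e_m)$ of elements of $A'$ we have
\[
\mathfrak{A}'\models\phi[\bar e] \iff \mathfrak{A}\models\phi[\pi(e_1),\dots,\pi(e_m)].
\]
The lemma is then the special case $e_k=\langle a_k,i_k\rangle$. The signature is relational, so the only terms are variables and no function or constant clauses arise.

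The induction runs as follows. For an atomic $R(x_{j_1},\dots,x_{j_r})$, the claim is exactly the defining clause of $R^{\mathfrak{A}'}$. This also covers repeated variables, since the definition is stated for arbitrary tuples of pairs. Equality atoms are excluded by hypothesis; they are precisely where the claim fails, because $\langle a,1\rangle\neq\langle a,2\rangle$. The Boolean cases follow immediately from the induction hypothesis. For $\exists y\,\psi$, there are two directions:
\begin{itemize}
\item If $\mathfrak{A}'\models\psi[\bar e,e]$, the induction hypothesis gives $\mathfrak{A}\models\psi[\pi\bar e,\pi(e)]$.
\item If $\mathfrak{A}\models\psi[\pi\bar e,a]$, choose the witness $e=\langle a,1\rangle$. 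Then $\pi(e)=a$, and the induction hypothesis transfers the fact to $\mathfrak{A}'$.
\end{itemize}
The universal case is dual, or reduces to $\neg\exists\neg$. The witness step works because $\pi$ is surjective, and that uses $n\geq 1$.

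The only point needing care is formulating the hypothesis generally enough, over arbitrary assignments to elements of $A'$ rather than fixed indices, so that the quantifier step goes through. Equivalently, one can observe that the graph of $\pi$ is a surjective strong homomorphism with full back-and-forth. Once the induction is set up this way, no real obstacle remains.
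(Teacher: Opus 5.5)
Your proposal is correct and follows the paper's approach: the paper's proof is just ``by induction on the structure of equality-free formulas,'' and you fill in the atomic, Boolean and quantifier cases, with the witness $\langle a,1\rangle$ justified by $n\geq 1$. Note that your ``strengthening'' to arbitrary elements of $A\times\{1,\dots,n\}$ is not really a strengthening: the lemma already quantifies over all indices $i_1,\dots,i_m$, so the statement as given is itself the right induction hypothesis.
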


\begin{proof}
By induction on the structure of equality-free formulas.
\end{proof}

Now we are ready to prove the following lemma, which establishes the preservation of logical consequences in the other direction.

\begin{lemma}\label{lemma_cons_preserv}
Let $\phi$ be an ordered sentence in which no subformula occurrence is of zero polarity, and $\CSk{\phi}$ its constant Skolemisation. Then: for any $\mathfrak{A}$ such that $\mathfrak{A}\models\phi$, there exists $\mathfrak{B}$ such that $\mathfrak{B}\models \CSk{\phi}$ and $\mathfrak{A},\mathfrak{B}$ satisfy the same ordered sentences.
\end{lemma}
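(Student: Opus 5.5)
The plan is to take $\mathfrak{B}$ to be a suitable $n$-fold Cartesian product $\mathfrak{A}'$ of $\mathfrak{A}$, expanded by interpretations of the Skolem constants. Lemma~\ref{lemma_inflation} shows that inflation preserves all equality-free sentences, and ordered sentences are equality-free. So $\mathfrak{A}$ and any expansion of $\mathfrak{A}'$ agree on ordered sentences over the original signature, whatever values the fresh constants receive. The whole task is therefore to interpret the Skolem constants in some structure agreeing with $\mathfrak{A}$ on ordered sentences so that $\CSk{\phi}$ holds.

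The difficulty is that a Skolem constant $c$ associated with a positive $\exists x_{k+1}\gamma$ lies in the scope of universal quantifiers over $x_1,\dots,x_k$. A single constant must witness $\gamma$ for \emph{all} values of $x_1,\dots,x_k$, whereas in $\mathfrak{A}$ the witness depends on them. This is where ordered structure and the tree-like unravelling help. Since $c$ sits at argument position $k+1$, its meaning only matters relative to the prefix $(x_1,\dots,x_k)$. I would build $\mathfrak{B}$ as an ordered-bisimilar copy in which the relations on tuples are defined \emph{tuple-wise}: for each tuple $\bar b b$ we choose which tuple of $\mathfrak{A}$ it imitates.

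Concretely, first replace $\mathfrak{A}$ by its unravelling into a structure $\mathfrak{U}$. The domain elements of $\mathfrak{U}$ are one pool of ``fresh'' elements, and each tuple $\bar u = (u_1,\dots,u_k)$ is mapped to a tuple $h(\bar u)$ of $\mathfrak{A}$ with $h(\bar u u)$ extending $h(\bar u)$. Set $P^{\mathfrak{U}}(\bar u)$ iff $P^{\mathfrak{A}}(h(\bar u))$. For each non-Skolem element $u$ appended after $\bar u$, $h$ may assign any element of $A$; this is the ordered-bisimulation freedom, using conditions 2 and 3 of the definition. For a Skolem constant $c$ interpreted as element $e_c$, define $h(\bar u e_c)$ to extend $h(\bar u)$ by a witness in $\mathfrak{A}$ for the associated existential (or a counterexample for the negative universal) evaluated at $h(\bar u)$, and an arbitrary element if none exists. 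Then $\{(h(\bar u),\bar u)\}$ is an ordered bisimulation, provided that every tuple extension is matched in both directions. That requires every element of $A$ to be realised as an extension of $h(\bar u)$ by some non-constant element. The inflation trick supplies this: take enough copies, $|A|$-many or infinitely many, so that free elements cover all of $A$ at each position. By Lemma~\ref{lemma_bisimulation}, $\mathfrak{U}\sim\mathfrak{A}$ gives agreement on ordered sentences.

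It remains to show $\mathfrak{U}\models\CSk{\phi}$. I would prove by induction on subformulas of $\phi$ over $X_k$ the following claim. For each tuple $\bar u$ in $\mathfrak{U}$, if $\mathfrak{A}\models\gamma[h(\bar u)]$ with $\gamma$ positive, then $\mathfrak{U}\models\CSk{\gamma}[\bar u]$, and dually for negative occurrences. For the Skolemised existential, the choice of $h(\bar u e_c)$ supplies exactly the witness. Universal steps and Boolean steps go through because $h$ restricted to $\mathfrak{U}$ is faithful on atoms.

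The main obstacle is making the definition of $h$ coherent. The same element $e_c$ must serve as the witness after \emph{every} prefix, and distinct constants must not clash at the same position. The fix is to define $\mathfrak{U}$ purely at the level of tuples: its relations are determined by $h$ on whole tuples, and $h$ on a tuple depends only on its last element together with the prefix. This is why the construction must be tuple-based, as in ordered bisimulation, rather than element-based.
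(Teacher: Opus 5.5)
Your argument is correct and shares the paper's core construction. Relations are pulled back tuple-wise along a map $h$ built prefix by prefix, with $h(\bar u e_c)$ a witness (or counterexample) at $h(\bar u)$, and agreement on ordered sentences comes from Lemma~\ref{lemma_bisimulation}.

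It differs from the paper in two details.
\begin{itemize}
\item The paper maps into the $n$-fold product $\mathfrak{C}$ and makes each $h_i$ a bijection with \emph{fresh} witnesses; this is the only reason it needs Lemma~\ref{lemma_inflation}. You map straight into $\mathfrak{A}$, let $h$ be non-injective, and get the back condition from a surjection of the non-Skolem elements onto $A$. So inflation is actually superfluous for you; your ``inflation trick'' amounts to taking the pool large enough.
\item The paper proves the Skolem axioms C1/C2 in $\mathfrak{B}$ via bisimulation invariance and then invokes Implicational Replacement. You instead run a direct polarity-indexed induction comparing $\gamma$ at $h(\bar u)$ with $\CSk{\gamma}$ (free variables still open) at $\bar u$. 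That induction uses only the forth property of $h$; the back property is needed solely for agreement on ordered sentences.
\end{itemize}
Your default ``arbitrary element'' clause makes $h$ total on Skolem constants whose index does not match the prefix length. The paper leaves that case implicit.

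Do drop the opening framing of $\mathfrak{B}$ as an expansion of $\mathfrak{A}'$ by constants; it fails in general. Take $\phi=\forall x_1\exists x_2 R(x_1,x_2)$ and $R^{\mathfrak{A}}=\{(a_1,a_2),(a_2,a_1)\}$. Then $\CSk{\phi}$ entails the equality-free sentence $\exists y\forall x R(x,y)$, which fails in $\mathfrak{A}$ and hence in every product of it. It is your tuple-wise $\mathfrak{U}$, not such an expansion, that does the work.
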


\begin{proof}

Let $\mathfrak{A}$ be a structure such that $\mathfrak{A}\models\phi$. We need to construct a model $\mathfrak{B}$ of $\CSk{\phi}$, which agrees with $\mathfrak{A}$ on all ordered sentences. Let $K$ be the set of Skolem constants in $\CSk{\phi}$. Recall that each member of $K$ is associated with an occurrence of a subformula of $\phi$. Suppose $|K|=n$. We let $C=A\times\{1,\dots,n\}$, and $\mathfrak{C}$ the $n$-fold Cartesian product of $\mathfrak{A}$. By Lemma~\ref{lemma_inflation}, $\mathfrak{A}$ and $\mathfrak{C}$ satisfy the same equality-free sentences, including ordered sentences.

Note that $|C|\geq |K|$. Let $B$ be a \emph{superset} of $K$ with $|B|=|C|$. 
We now construct a sequence of functions $h_0,h_1,\dots$, where each $h_i$ is a bijection from $B^i$ to $C^i$, by the following procedure. First, let $h_0(\varepsilon)=\varepsilon$, and thus $h_0$ is defined. (Recall that $\varepsilon$ denotes the empty tuple.) When $h_i$ is defined, we define $h_{i+1}$ as follows. Suppose $\bar{b}\in B^i$, $\bar{c}\in C^i$, and $h_i(\bar{b})=\bar{c}$. We proceed with the following operations in turn:
\begin{description}
\item[Step 1:] For any $b\in K$:
\begin{enumerate}
    \item[(1)] If $b\in K$ is the Skolem constant associated with an occurrence of a subformula of $\phi$ in the form $\exists x_{i+1}\gamma$, and $\mathfrak{C}\models\exists x_{i+1} \gamma[\bar{c}]$, then pick up a fresh $c\in C$ such that $\mathfrak{C}\models\gamma[\bar{c}c]$, and let $h_{i+1}(\bar{b}b)=\bar{c}c$.
    \item[(2)] If $b\in K$ is the Skolem constant associated with an occurrence of a subformula of $\phi$ in the form $\forall x_{i+1}\gamma$, and $\mathfrak{C}\nvDash\forall x_{i+1} \gamma[\bar{c}]$, then pick up a fresh $c\in C$ such that $\mathfrak{C}\nvDash\gamma[\bar{c}c]$, and let $h_{i+1}(\bar{b}b)=\bar{c}c$.
\end{enumerate}
\item[Step 2:] For any $b\in B{\setminus}K$, pick up a fresh $c\in C$, and let $h_{i+1}(\bar{b}b)=\bar{c}c$.
\end{description}
This way we have fully defined $h_{i+1}$. Note that in (1), if $\mathfrak{C}\models\exists x_{i+1} \gamma[\bar{c}]$, there are, by Lemma~\ref{lemma_inflation}, at least $n$ distinct $c\in C$ for which $\mathfrak{C}\models\gamma[\bar{c}c]$. Thus, for each $b\in K$, there is always a fresh $c$ we can use. Similarly, in (2), if $\mathfrak{C}\nvDash\forall x_{i+1} \gamma[\bar{c}]$, there are at least $n$ distinct $c\in C$ for which $\mathfrak{C}\nvDash\gamma[\bar{c}c]$, and thus there is always a fresh $c$ we can use.

Let $h=\bigcup_{i\geq 0} h_i$. We can easily verify that $h\subseteq\bigcup_{i\geq 0}(B^i\times C^i)$. Also, we define the structure $\mathfrak{B}$ over the domain $B$ such that:
\begin{align*}
R^{\mathfrak{B}} & =\{\bar{b}\in B^m\mid h(\bar{b})\in R^{\mathfrak{C}}\} \quad \text{for each $m$-ary predicate $R$;}\\
b^\mathfrak{C} & =b \quad  \text{for each $b\in K$.}
\end{align*}
It is routine to check that $h$ is an ordered bisimulation between $\mathfrak{B}$ and $\mathfrak{C}$. Since $h(\varepsilon)=\varepsilon$, we have $\mathfrak{B}\sim\mathfrak{C}$. Then, we have by Lemma~\ref{lemma_bisimulation} that $\mathfrak{B}$ and $\mathfrak{C}$ satisfy the same ordered sentences, and thus $\mathfrak{A}$ and $\mathfrak{B}$ satisfy the same ordered sentences. In particular, we have $\mathfrak{B}\models\phi$.
Also, the construction of $\mathfrak{B}$ ensures that:
\begin{description}
    \item[C1] if $b\in K$ is associated with $\exists x_{i+1}\gamma$, we have $\mathfrak{B}\models\exists x_{i+1}\gamma\to\gamma(x_{i+1}/b)$;
    \item[C2] if $b\in K$ is associated with $\forall x_{i+1}\gamma$, we have $\mathfrak{B}\models\gamma(x_{i+1}/b)\to\forall x_{i+1}\gamma$.
\end{description}
To see C1, suppose that $b\in K$ is associated with $\exists x_{i+1}\gamma$, and that $\mathfrak{B}\models\exists x_{i+1}\gamma[\bar{b}]$ for some $\bar{b}\in B^i$. Since $(\mathfrak{B},\bar{b})\sim(\mathfrak{C},h(\bar{b}))$, we have $\mathfrak{C}\models\exists x_{i+1}\gamma[h(\bar{b})]$. By the construction of $h_{i+1}$, $h(\bar{b}b)=h(\bar{c})c$ for some $c\in C$ such that $\mathfrak{C}\models\gamma[h(\bar{b})c]$; then $(\mathfrak{B},\bar{b}b)\sim(\mathfrak{C},h(\bar{b})c)$, and thus $\mathfrak{B}\models\gamma[\bar{b}b]$. Since $b$ is a constant denoting itself in $\mathfrak{B}$, and $\gamma$ is an ordered formula over $X_{i+1}$, we get $\mathfrak{B}\models\gamma (x_{i+1}/b)[\bar{b}]$ from $\mathfrak{B}\models\gamma[\bar{b}b]$. Therefore, $\mathfrak{B}\models\exists x_{i+1}\gamma\to\gamma(x_{i+1}/b)$. C2 can be proved similarly.

Finally, by C1, C2, and the Implicational Replacement Theorem (see \cite[Thm. 8.2.4]{Fitting1996}), we get $\mathfrak{B}\models\phi\to\CSk{\phi}$, and then $\mathfrak{B}\models \CSk{\phi}$.
This completes the proof.
\end{proof}

\begin{theorem}
Let $\phi$ be an ordered sentence in which no subformula occurrence is of zero polarity, and $\CSk{\phi}$ its constant Skolemisation. Then: for any ordered sentence $\psi$, we have that $\models \CSk{\phi}\to\psi$ iff $\models\phi\to\psi$.
\end{theorem}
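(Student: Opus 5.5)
The theorem follows by combining Lemma~\ref{lemma_monotonicity} and Lemma~\ref{lemma_cons_preserv}, one lemma for each direction of the biconditional.

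For the direction from right to left, assume $\models\phi\to\psi$. By Lemma~\ref{lemma_monotonicity} we have $\models\CSk{\phi}\to\phi$. Chaining the two implications gives $\models\CSk{\phi}\to\psi$. This direction uses only that $\phi$ has no subformula occurrence of zero polarity; it does not need that $\phi$ or $\psi$ are ordered.

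For the direction from left to right, assume $\models\CSk{\phi}\to\psi$. I would argue by contraposition. Suppose $\not\models\phi\to\psi$. Then there is a structure $\mathfrak{A}$ with $\mathfrak{A}\models\phi$ and $\mathfrak{A}\not\models\psi$. Because $\phi$ is an ordered sentence with no zero-polarity subformula occurrences, Lemma~\ref{lemma_cons_preserv} applies. It gives a structure $\mathfrak{B}$ with $\mathfrak{B}\models\CSk{\phi}$ such that $\mathfrak{A}$ and $\mathfrak{B}$ satisfy the same ordered sentences. Since $\psi$ is an ordered sentence and $\mathfrak{A}\not\models\psi$, we get $\mathfrak{B}\not\models\psi$. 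So $\mathfrak{B}$ is a model of $\CSk{\phi}\wedge\neg\psi$, which contradicts the assumption.

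The main point to check is a signature issue. $\mathfrak{B}$ interprets the Skolem constants, and $\mathfrak{A}$ does not. Also, $\psi$ may mention predicates that do not occur in $\phi$, so "$\mathfrak{A}$ and $\mathfrak{B}$ satisfy the same ordered sentences" must cover $\psi$'s vocabulary. The construction in Lemma~\ref{lemma_cons_preserv} handles this if $\mathfrak{A}$ is taken over the full relational signature of $\phi$ and $\psi$. The inflation $\mathfrak{C}$ and the bisimulation $h$ are both defined uniformly for every predicate, so all ordered sentences in that common relational signature are preserved. Ordered sentences contain no constants, so the Skolem constants play no role in evaluating $\psi$. With this noted, the argument is just the two-line combination above.
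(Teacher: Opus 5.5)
Your proposal is correct and follows the paper's proof. The `if' direction is Lemma~\ref{lemma_monotonicity} chained with $\models\phi\to\psi$. The `only-if' direction takes the model $\mathfrak{B}$ supplied by Lemma~\ref{lemma_cons_preserv}; you argue by contraposition where the paper argues directly, which makes no difference, and your remark about taking $\mathfrak{A}$ over the joint relational signature of $\phi$ and $\psi$ makes explicit a point the paper leaves implicit.
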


\begin{proof}
The `if' direction follows from Lemma~\ref{lemma_monotonicity}. For the `only-if' direction, suppose ${\models \CSk{\phi}\to\psi}$ and $\mathfrak{A}\models\phi$. Then, by Lemma~\ref{lemma_cons_preserv}, there exists $\mathfrak{B}$ such that $\mathfrak{B}\models \CSk{\phi}$ and $\mathfrak{A},\mathfrak{B}$ satisfy the same ordered sentences. Since $\models \CSk{\phi}\to\psi$, we have $\mathfrak{B}\models \psi$ and then $\mathfrak{A}\models \psi$. Thus, $\models\phi\to\psi$.
\end{proof}

Finally, since the transformation to clausal form (after constant Skolemisation) involves only logical equivalences, we conclude that, for any ordered sentence $\phi$, the clausal form of $\CSk{\phi}$ has the same logical consequences in the ordered fragment as $\phi$.

\section{The SCAN Algorithm}
\label{section_SCAN_algorithm}

The input to SCAN is a set $N$ of first-order clauses and a list of
predicate variables~$\overline X$. The aim is to eliminate these
symbols from $N$ producing a set $M$ such that $$\models \exists X N \mlequiv M.$$

When SCAN is applied to a first-order formula $\varphi$, the formula is first
transformed into clausal form as implemented in the Otter theorem
prover.
This involves four steps:
(i) transforming the formula into prenex normal form $\overline{Q x} \varphi'$, which
moves the quantifiers to the outside of the formula so that $\varphi'$ is
free of quantifiers.
(ii) Skolemisation: the replacement of existentially quantified variables by Skolem
terms and dropping the existential quantifiers, which results in a formula $\forall \overline y \varphi''$ where $\varphi''$ is quantifier-free and $\overline y \subseteq \overline x$, 
In the next step, (iii), $\varphi''$
is transformed into conjunctive normal form.
The clausal form $N = \Cls(\varphi)$ is then obtained by (iv) writing $\varphi''$ as a set of clauses. 

By $\overline s$ we denote a sequence of terms $(s_1, \ldots, s_n)$.
$L(\overline s)$ denotes a literal with argument sequence $\overline s$
where $n$ is the arity of the predicate symbol of $L$ and the length of
$\overline s$.
Assuming
the lengths of $\overline s$ and $\overline t$ are $n$,
$\overline s \not\approx \overline t$ is shorthand notation for 
$s_1 \not\approx t_1 \mlor \ldots \mlor s_n \not\approx t_n$.

The idea of the SCAN algorithm is to generate sufficiently many logical
consequences of $N$ and eventually only keeping those non-redundant
clauses in which no symbols from $\overline X$ occur.
This is realised by a saturation process which interleaves (i)~inference
steps, (ii)~normalisation and redundancy elimination steps, and 
(iii)~purifying deletion steps.
These are based by the inference system $\cresol$ defined by the rules given
Tables~\ref{table_scan_inference_rules} and~\ref{table_cresol_calculus}.

\begin{table}[tb]
\caption{The inference rules of $\mathsf{C}$.}
\label{table_scan_inference_rules}
\begin{framed}
\begin{tabular}{@{}p{1\linewidth}@{}}
\textbf{$\mathsf{C}$-Resolution}\qquad
\AxiomC{$C\lor X(\overline{s})$}
\AxiomC{$D\lor \neg X(\overline{t})$}
\BinaryInfC{$C\lor D\lor \overline{s}\not\approx\overline{t}$}
\DisplayProof
\end{tabular}
provided $X$ is predicate variable, the two premises have no
individual variables in common, and the premises are distinct (to prevent
self-resolution).
\bigskip
\par
\begin{tabular}{@{}p{1\linewidth}@{}}
\textbf{$\mathsf{C}$-Factoring}\qquad
\AxiomC{$C\lor (\neg)X(\overline{s})\lor(\neg)X(\overline{t})$}
\UnaryInfC{$C\lor (\neg)X(\overline{s})\lor \overline{s}\not\approx\overline{t}$}
\DisplayProof
\end{tabular}
provided $X$ is a predicate variable.
\end{framed}
\end{table}

\begin{table}[tb]
\caption{The $\mathsf{C}$-resolution system.}
\label{table_cresol_calculus}
\begin{framed}
\begin{tabular}{@{}p{1\linewidth}@{}}
\textbf{Inference}\qquad
\AxiomC{$N$}
\UnaryInfC{$N\cup\{\mathrm{norm}(C)\}$}
\DisplayProof
\end{tabular}
where $C$ is a $\mathsf{C}$-resolvent or $\mathsf{C}$-factor of clauses in $N$,
and $C \Rightarrow_{\mathrm{CEl}}^*\mathrm{norm}(C)$. 

\bigskip
\begin{tabular}{@{}p{1\linewidth}@{}}
\textbf{Redundancy elimination}\qquad
\AxiomC{$N$}
\UnaryInfC{$M$}
\DisplayProof
\end{tabular}
if $M$ is obtained from $N$ by redundancy elimination.

\bigskip
\begin{tabular}{@{}p{1\linewidth}@{}}
\textbf{Purified clause deletion}\qquad
\AxiomC{$N\cup\{C\lor (\neg)X(\overline{s})\}$}
\UnaryInfC{$N$}
\DisplayProof
\end{tabular}
provided $X$ is a predicate variable and $C\lor (\neg) X(\overline s)$ is \emph{purified in $N$ with respect
to the $X$-literal} $(\neg) X(\overline s)$, i.e.,
no non-redundant inferences are possible with the clause
$C\lor (\neg) X(\overline s)$ upon the literal $(\neg) X(\overline s)$ 
and clauses in $N$.

\bigskip
\begin{tabular}{@{}p{1\linewidth}@{}}
\textbf{Extended purity deletion$^{+(-)}$}\qquad
\AxiomC{$N\cup M$}
\UnaryInfC{$N$}
\DisplayProof
\end{tabular}
if $N$ and $M$ are disjoint, $N$ is free of predicate variable $X$,
and every clause in~$M$ contains $X$ positively (negatively).
\end{framed}
\end{table}

Table~\ref{table_scan_inference_rules} specifies the inference rules
of SCAN performing all inferences upon $\overline X$-literals.
The \emph{\cresol-resolution} and \emph{\cresol-factoring} rules are variations of the
standard resolution and factoring rules in which constraints are used
instead of unification.
It is convenient to think of a \cresol-resolvent as a conditional
formula saying, if each of the corresponding terms are equal (i.e., $s_i
\foeq t_i$), then $C\mlor D$ holds.
Similarly, for \cresol-factoring.

We assume that \emph{clauses are eagerly normalised using:}
\begin{description}
    \item[\textbf{Constraint elimination:}] 
\quad $C\lor s\not\approx t\ \Rightarrow_{\mathrm{CEl}}  \ C\sigma$\\
if $\sigma$ is a~most general unifier of $s\approx t$,
i.e., $s\sigma\approx t\sigma$
and $\sigma$ is the most general substitution with this property.
\end{description}

The constraint elimination rule is useful for simplifying
clauses, reducing their size and making them more digestable by removing as many constraints
as possible, because otherwise clauses grow in width very rapidly.
Eager constraint elimination improves the performance and success
rate because other simplification possibilities such as those in
Table~\ref{table_simplification_rules} can be
recognised and applied more efficiently during the saturation
process~\cite{GabbayOhlbach92a,Engel96}.
Another purpose of constraint elimination is that it performs
reflexivity resolution required for equational reasoning in the
presence of inequality literals. 
Due to the absence of positive equational literals no further rules are
needed for equality inference.
In addition, the characterisation of the search space for the clausal form of the ordered
fragment in the next section is easier if it is assumed that clauses
contain only constraints between distinct constants.

The \cresol-resolution system can be freely enhanced by any
equivalence preserving \emph{redundancy elimination} rules to improve
efficiency while preserving correctness.
Without redundancy elimination the success rates are in general lower. Table~\ref{table_simplification_rules} lists some common redundancy
elimination rules which are highly effective in saturation-based
provers.

The \emph{purified clause deletion} rule deletes a clause with a designated
$X$-literal if no inference upon this literal produces clauses
not already implied by other clauses in the current set.

There are two \emph{extended purity deletion} rules: The positive version
deletes all clauses containing $X$, if these clauses all contain $X$
positively (i.e., unnegated, with positive polarity). 
The intuition is that then instantiating~$X$ with $\top$ makes these
clauses true.
Dually, if all clauses containing $X$ contain $X$ negatively, then these
clauses can be made true by substituting $\bot$ for $X$.
(Note that standard purity deletion is an instance of extended purity
deletion.)

\begin{theorem} \
\begin{enumerate}[(i)]
\item
Each inference, constraint elimination and simplification rule $N/M$
preserves logical equivalence:
$N \mlequiv M$, which implies
$\exists \overline X N \mlequiv \exists \overline X M$.
\item
Every other rule in the system $\cresol$ (i.e., purified clause
deleion and extended purity deletion) preserve eqivalence up to
existential quantification of the predicate variables in $\overline X$:
$\exists \overline X  N \mlequiv \exists \overline X M$.
\end{enumerate}
\end{theorem}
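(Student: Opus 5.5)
The plan is to check each rule of the system $\cresol$ separately, treating a clause set as the universal closure of the conjunction of its clauses. For (i), an inference step only adds a clause. It therefore suffices to show that every $\cresol$-resolvent and $\cresol$-factor, together with its normal form, is a logical consequence of its premises.

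For $\cresol$-resolution, take any interpretation and any variable assignment $\alpha$ satisfying both premises. If some $s_i\not\approx t_i$ holds under $\alpha$, the conclusion is true. Otherwise $\overline s$ and $\overline t$ denote the same tuple, so $X(\overline s)$ and $\neg X(\overline t)$ cannot both be true. Hence $C$ or $D$ holds. The premises share no variables, so a single assignment can be used for both.

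The argument for $\cresol$-factoring is similar. Conversely, the premise of a factoring step follows from $C\lor(\neg)X(\overline s)\lor \overline s\not\approx\overline t$ together with the premise itself. In any case, adding a consequence preserves equivalence.

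For constraint elimination, I will show $C\lor s\not\approx t \equiv C\sigma$ under universal closure, where $\sigma$ is an mgu of $s$ and $t$. Since $\sigma$ is an mgu, $(C\lor s\not\approx t)\sigma = C\sigma\lor s\sigma\not\approx s\sigma$, which entails $C\sigma$; so the closure of the original clause implies $C\sigma$. For the converse, assume $\forall C\sigma$, and fix an assignment $\alpha$ under which $s$ and $t$ denote the same element. Here the terms are variables or constants, so $\sigma$ binds a variable to a term and $\alpha$ respects that binding. It follows that $C$ holds under $\alpha$, whereas if $s\neq t$ under $\alpha$ the disequation already makes the clause true. For general terms I would instead argue that $\alpha$ factors through $\sigma$ whenever it satisfies $s\approx t$ in the term model; this needs care, since in arbitrary structures only the first direction is guaranteed. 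In the present variable/constant setting, however, the argument above suffices.

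The simplification rules of Table~\ref{table_simplification_rules} preserve equivalence by the standard arguments. A tautology is valid. A subsumed clause $D$ follows from $C$, because $C\sigma\subseteq D$. Condensation satisfies $\mathrm{cond}(C)\subseteq C$ and $C\sigma\subseteq\mathrm{cond}(C)$, which gives both directions. Equivalence of $N$ and $M$ then lifts to $\exists\overline X N\equiv\exists\overline X M$ because existential quantification is monotone.

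For (ii), deleting clauses gives $\exists\overline X N\models\exists\overline X M$ trivially, so only the converse needs an argument. For extended purity deletion, take a model of $N$, which is free of $X$, and reinterpret $X$ as the full relation (respectively the empty one). This makes every clause of $M$ true without affecting $N$, so the model witnesses $\exists X(N\cup M)$.

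For purified clause deletion, I will use the known result of Gabbay and Ohlbach that a model of $N$ can be modified on $X$ to satisfy the purified clause. The modification sets $X$ to the least (greatest) extension forced by the clauses containing the opposite literal. That every inference upon the designated literal is redundant guarantees the remaining clauses stay true. This is the main obstacle, so I would appeal to the soundness proof in~\cite{GabbayOhlbach92a} rather than redo it.
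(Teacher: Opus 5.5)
The paper does not prove this theorem. It states it as background and credits the soundness of SCAN to Gabbay and Ohlbach. Your rule-by-rule semantic check is the natural way to fill it in, and the parts you carry out are correct:
\begin{itemize}
\item For the Inference rule only $N\models\mathrm{norm}(C)$ is needed. The direction $\forall(C\lor s\not\approx t)\models\forall C\sigma$ holds for arbitrary terms, since $s\sigma\not\approx t\sigma$ is the false literal $s\sigma\not\approx s\sigma$.
\item Your arguments for tautology deletion, subsumption deletion, condensation and extended purity deletion are fine.
\item For purified clause deletion you rely on the literature, exactly as the paper does.
\end{itemize}
Two points in your write-up should still be corrected.

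First, drop the term-model idea for the converse of constraint elimination. With function symbols the step is simply not an equivalence. For example, $P(x)\lor f(x)\not\approx f(a)$ normalises to $P(a)$, but $P(a)\not\models\forall x\,(P(x)\lor f(x)\not\approx f(a))$: take a two-element structure where $f$ is constant and $P$ holds only of $a$. So your restriction to variables and constants is essential, not a convenience. There $\sigma$ is a single binding $x\mapsto t$ with $x$ not in $t$, and the step is just the one-point rule $\forall x\,(x\not\approx t\lor C)\equiv C\{x\mapsto t\}$. That restriction is what makes the constraint-elimination part of (i) true for the clauses of this paper.

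Second, the mechanism you sketch for purified clause deletion does not work as stated. Once the clauses containing the opposite literal have several $X$-literals, there is no greatest (least) interpretation of $X$ compatible with them. For $\neg X(a)\lor\neg X(b)$, or dually $X(a)\lor X(b)$, there are two incomparable extremal choices.

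What works, for $K=C\lor X(\overline s)$ with $C$ free of $X$, is to enlarge the given witness only where $K$ fails: $X'=X\cup\{\overline s^{\alpha}\mid C \text{ is false under } \alpha\}$. For a negative $K$, dually shrink $X$.
\begin{itemize}
\item Clauses whose $X$-literals are all positive stay true by monotonicity.
\item Take a clause $D\lor\neg X(\overline t_1)\lor\dots\lor\neg X(\overline t_k)$, possibly with further positive $X$-literals. The remaining set $N$ entails its iterated \cresol-resolvent with $k$ renamed copies of $K$. Evaluated in the original model under the combined assignment, that resolvent forces $D$, so the clause stays true.
\end{itemize}
This is why SCAN's purification loop resolves $K$ against its own descendants, and why ``purified'' has to be read relative to them. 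If $C$ itself contains $X$-literals, a fixpoint iteration is needed. That is the genuinely delicate case, which both you and the paper leave to the cited soundness proof.
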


Let $N$ be the input clause set and suppose the aim is to eliminate the
variables $\overline X$.
The idea of the derivation process in SCAN is the following:
First, processing of the input clauses is done, which may include
$\cresol$-factoring with eager constraint elimination, subsumption deletion, tautology deletion and
creation of appropriate data structures.

Then, the main derivation loop is performed until all symbols in
$\overline X$ have been eliminated. Each iteration consists of the
following steps:
\begin{enumerate}[(a)]
\item
A clause $K$ and an $X$-literal $L$ in the clause is picked for \emph{purification}:
All $\cresol$-inferences are performed with $K$ upon $L$ and the remaining
clauses as well as all obtained clauses in this sub-loop.
Each inferred clause is checked if it is subsumed by a present clause or
subsumes some present clauses.
If no more $\cresol$-inferences are possible with $K$ upon $L$ then
$K$ is \emph{purified} and is deleted.
\item
Extended purity deletion (positive and negative) is applied.
\item
Other redundancy elimination steps may be performed at any time.
\end{enumerate}
SCAN finally attempts to express the obtained clause set $M$ as a first-order
formula by reversing Skolemisation. 

There are various choices that can be made in the algorithm and
different derivations are obtained depending on the choices. SCAN
may or may not terminate.
If SCAN terminates normally then the remaining clauses do not contain
any of the predicate variables in $\overline X$.

The choices also influence if back-translation to a first-order formula is possible.
The problem is that derived clauses can contain different complex Skolem
terms requiring Henkin quantifiers which cannot be
captured by a linear sequence of quantifiers~\cite{GabbayOhlbach92a}.

\begin{theorem}
SCAN is correct:
$\exists \overline X  N \mlequiv M$, where $M$ is the set of persisting clauses when
SCAN terminates.
\end{theorem}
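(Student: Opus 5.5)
The natural approach is to chain together the equivalences already established for the individual rules, and to handle the final back-translation (reversal of Skolemisation) as a separate step. Let $N = N_0, N_1, \ldots, N_k$ be the sequence of clause sets produced by a terminating run of SCAN. Each transition $N_j / N_{j+1}$ is one of three things: an inference step with eager normalisation, a redundancy elimination step, or a purified clause deletion or extended purity deletion step. By part (i) of the preceding theorem, inference steps (including constraint elimination via $\Rightarrow_{\mathrm{CEl}}$) and simplification steps give $N_j \mlequiv N_{j+1}$. Hence they give $\exists \overline X N_j \mlequiv \exists \overline X N_{j+1}$. By part (ii), purified clause deletion and extended purity deletion also give $\exists \overline X N_j \mlequiv \exists \overline X N_{j+1}$. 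A finite induction on $j$ then yields $\exists \overline X N \mlequiv \exists \overline X N_k$.

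The second step uses the fact that SCAN terminated normally. In that case the persisting set $M = N_k$ contains none of the predicate variables in $\overline X$. Therefore $\exists \overline X M \mlequiv M$, since existential quantification over symbols that do not occur is vacuous. Combining this with the first step gives $\exists \overline X N \mlequiv M$, as required. This argument treats $N$ and $M$ as clause sets, read as universal closures with Skolem symbols.

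If the statement is also meant to cover the formula obtained after reversing Skolemisation, I would add one further observation. Let $\varphi$ be the original formula and $\psi$ the unskolemised formula. We have $\exists\overline X\varphi \mlequiv \exists\overline X\exists \overline f N$, where $\overline f$ are the Skolem functions, because Skolemisation preserves equivalence up to existential quantification over the Skolem symbols. Then $\exists\overline X\exists\overline f N \mlequiv \exists \overline f M \mlequiv \psi$, where the last equivalence holds whenever unskolemisation succeeds.

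The main obstacle is part (ii) of the preceding theorem, which we are allowed to assume. If I had to verify it, the delicate case is purified clause deletion, because it needs a fixpoint-style argument of the kind given in Gabbay and Ohlbach. Extended purity deletion, by contrast, follows simply by instantiating $X$ with $\top$ or $\bot$. Given that theorem, the remaining work is bookkeeping. One must check that eager normalisation is a composition of equivalence-preserving $\Rightarrow_{\mathrm{CEl}}$ steps. One must also check that running inferences and deletions in an interleaved order does not break the chaining, which holds because every single step preserves $\exists\overline X$-equivalence.
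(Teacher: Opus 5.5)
Your proposal is correct and follows the same route the paper relies on. The paper gives no separate argument: the theorem is meant to follow from the preceding theorem, with each rule preserving $\exists \overline X$-equivalence and these equivalences chained along the finite derivation, together with the stated observation that on normal termination the persisting set $M$ is free of $\overline X$, so $\exists \overline X M \mlequiv M$.
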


As already said, the SCAN algorithm is not guaranteed to terminate; in fact, in general, there cannot
be an algorithm that necessarily succeeds, even if a finite
solution exists,
since the SOQE problem is not even semi-decidable.

\section{The Prefix Stable Clausal Class \BPLplus}
\label{section_BPLplus}

The search space of SCAN on the clausal form of ordered sentences can be
characterized
as an adaptation of the clausal class \BPL in which constants occur in fixed
argument positions in atoms and clauses possibly include
negative constraints, i.e., the inequality literals.
We name the class \BPLplus (basic path clauses with inequality
constraints and fixed constant positions).

Formally, the signature of \BPLplus is any signature $\Sigma$ defined in Section~\ref{section_BPL_resolution} also including equality $\foeq$. 
We assume that each constant $a$ is associated with an index~$i$ in $\{1, \ldots, n \}$, where $n$ is the maximal arity of any predicate symbol in the given ordered sentence.

\BPLplus is the class of all $\exists^*\forall^*$~clauses 
in which additionally the following conditions all hold.
For any two non-constraint literals $(\neg)P(\underline u)$ and $(\neg)Q(\underline v)$ in $C$, both the following hold:
\begin{description}
\item[\textbf{L1}]
if $u_i$ and $v_j$ are the same variable then $i=j$, and
\item[\textbf{L2}] 
each pair $u_k$ and $v_k$ preceding $u_i$ and $v_i$ are identical,
unless $u_k$ and $v_k$ are both (possibly distinct) constants.
\item[\textbf{L3}]
For any constant~$a$, if the index of $a$ is $i$ ($1\leq i \leq n$) then $a$ occurs only at
argument position $i$ in non-constraint literals of any clause.
\item[\textbf{L4}]
In each constraint $u \foneq v$ of clause~$C$, $u$ and $v$ are
\emph{position compatible}, i.e., 
if $u$ is a variable occuring at argument position $i$ in a non-constraint literal of $C$ 
or $u$ is a constant with index $i$ 
then either $v$ is a variable occurring at argument position $i$ in a non-constraint literal of $C$,
$v$ is singular (does not occur anywhere else in~$C$) or 
$v$ is a constant with index $i$.
We say that the \emph{position associated with $u \foneq v$} is $i$.
\end{description}

\emph{L1 and L2 say that clauses are prefix-stable for variables when
constants are disregarded.
\emph{L3} says that argument positions of constants are unique.
\emph{L4} says that constraints are position compatible.}
Observe that \emph{L3} holds for both sets of clauses in \BPLplus and
individual clauses, whereas \emph{L1}, \emph{L2} and \emph{L4} are
clause specific.
When we say a set of sequences of variables and constants is
\emph{prefix stable} (for variables) \emph{modulo disregard for
constants} we mean that the set satisfies \emph{L1} and \emph{L2}.

For example, the following clauses satisfy \emph{L1}--\emph{L4}
\begin{align*}
Q(a,y) \mlor P(a,y), \quad
Q(a,y) \mlor P(b,y), \quad
Q(a,y) \mlor P(b,y) \mlor a\foneq b \mlor c\foneq d
\end{align*}
while 
$Q(a,y) \mlor P(x,y)$
does not, since \emph{L2} does not hold.

Suppose $\varphi$ is a sentence in the ordered fragment (without
equality).
Let $N$ be a clausal form of the constant Skolemisation of~$\varphi$.
It is not difficult to see that clauses in $N$ contain only
variables and constants, are prefix-stable
as defined at the beginning of Section~\ref{section_BPL_resolution}
and the uniqueness of argument
position of constants (\emph{L3}) is satisfied.
Thus $N$ satisfies both the more general property of prefix
stability for variables (\emph{T1} and \emph{T2}) and 
prefix stability for variables modulo disregard for constants
(\emph{L1} and \emph{L2}).
Since $N$ is free of equality and it is during the constraint
resolution or constraint factoring steps that
constraints (inequality literals) are introduced, condition~\emph{L4} vacuously holds. Therefore:

\begin{theorem}
\label{theorem_input_clauses_in_BPLplus}
Let $\varphi$ be an ordered sentence and let $N = \Cls(\CSk{\varphi})$.
Then, $N \psubset \BPL$ and $N \psubset \BPLplus$.
\end{theorem}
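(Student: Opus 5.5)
The plan is to track, through each stage of $\Cls(\CSk{\varphi})$, which term can stand at which argument position, and to show that this is fixed by the quantifier nesting of $\varphi$. As with the earlier results on constant Skolemisation, I assume that no subformula occurrence of $\varphi$ has zero polarity. If $\varphi$ contains $\leftrightarrow$, I first eliminate it in the usual polarity-preserving way, which keeps the formula ordered.

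\textbf{Step 1 (terms in the ordered sentence).} By induction on the definition of ordered formulas over $X_m$, every atom of $\varphi$ has the form $P(x_1,\ldots,x_k)$. Each occurrence of $x_j$ in it is bound by the innermost quantifier occurrence on $x_j$ that encloses the atom. These binders are nested in index order: the binder of $x_j$ lies inside the scope of the binder of $x_{j'}$ whenever $j'<j$. In particular, a variable $x_j$ only ever appears at argument position $j$.

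\textbf{Step 2 (constant Skolemisation).} $\CSk{\varphi}$ replaces each positive existential and each negative universal quantifier occurrence on some $x_j$ by a fresh constant $c$. It substitutes $c$ exactly for the occurrences of $x_j$ in that quantifier's scope, and by Step 1 these all sit at position $j$. So I assign $c$ the index $j$, and L3 follows. The quantifiers that remain are positive universals and negative existentials, so all of them become $\forall$ after pushing negations inward. Hence prenexing, with bound variables renamed apart (one fresh variable $y_\delta$ per remaining quantifier occurrence $\delta$), yields a purely universal prefix. No functional Skolem terms are introduced, and the matrix contains only variables and constants. CNF conversion and writing the result as clauses do not change any literal's argument sequence.

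\textbf{Step 3 (prefix stability).} Suppose a variable $y_\delta$ occurs in two literals of one clause, where $\delta$ was a quantifier on $x_i$. Both atoms lie within the scope of $\delta$, so $y_\delta$ is at position $i$ in both, which gives T1/L1. For each $k<i$, the binder of $x_k$ visible at either atom is the binder of $x_k$ that encloses $\delta$ itself. Since $\delta$ lies in the scope of exactly one innermost $x_k$-quantifier, that binder is the same occurrence for both atoms. It has therefore been replaced either by the same variable or by the same constant. So the prefixes coincide, which gives T2 and hence L2, a weaker condition.

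In fact this shows full prefix stability for variables. Condition L4 holds vacuously because $\varphi$ is equality-free, so $N$ contains no constraint literals. This yields $N\psubset\BPL$ and $N\psubset\BPLplus$.

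The main obstacle is Step 3: making rigorous that renaming apart during prenexing respects the binder structure. That is, two occurrences of the same clause variable must come from the same quantifier occurrence, and the enclosing binders at smaller indices must be shared. I would handle this by fixing the renaming before prenexing, labelling each quantifier occurrence by its path in the syntax tree, and arguing on those labels.
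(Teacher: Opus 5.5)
Your argument is correct, but it follows a different route from the paper. The paper does not argue from the syntax of ordered formulas. Instead it observes that $\Cls(\CSk{\varphi})$ coincides with the clausal form of the $n$-ary optimised functional translation of the \ml{KD}-formula corresponding to $\varphi$. It then imports membership in \BPL from the earlier results on that translation, and says that L1--L4 ``can be checked'': L3 from the fixed positions of the Skolem constants, and L4 vacuously since no constraints occur. You instead prove the needed invariant from scratch. Atoms have the form $P(x_1,\ldots,x_k)$, binders are nested by index, and below a quantifier on $x_i$ no variable $x_k$ with $k\le i$ is rebound. From this you get:
\begin{itemize}
\item each Skolem constant has a well-defined index (L3);
\item each clause variable sits at a single argument position (T1/L1);
\item the prefix of a shared variable is the image of the unique chain of binders above its quantifier occurrence (T2, hence L2).
\end{itemize}
The paper's route is shorter and places the result inside the existing functional-translation theory for \BPL. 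However, it rests on an identification of two clausal forms that is asserted rather than proved in the paper. Yours is self-contained and elementary, and it makes explicit which structural features of the ordered fragment are used. It also shows prefix stability for variables in the strict sense (T1/T2), not merely modulo constants. The cost is that you must be careful about renaming apart during prenexing, which your labelling of quantifier occurrences handles. Your explicit no-zero-polarity assumption is also appropriate, since constant Skolemisation is only defined under it, though the theorem statement leaves this implicit.
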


\begin{proof}
By the results of \cite{Schmidt97d,Schmidt99a}, since the clausal form
obtained with constant Skolemisation is the same as the clausal form of
the $n$-ary optimised functional translation~\cite{SchmidtHustadt13}
of the \ml{KD}-modal formula corresponding to $\varphi$.
Hence $N \psubset \BPL$ and it can be checked that $N$
satisfies conditions \emph{L1}--\emph{L4} and belongs to \BPLplus.
\end{proof}

One might have expected that the inference rules of the system
$\cresol$ preserve
prefix-stability for variables, but the following example shows the
constraint resolvent of two \BPL-clauses after normalisation may not
be prefix-stable for variables, where the aim is to eliminate $X$.
\begin{align*}
1.\ & \hphantom{\neg} X(a,y) \mlor P(a,y) \\
2.\ & \neg X(b,z) \mlor Q(b,z)\\ 
\text{\sout{3.}}\ & \hphantom{\neg} P(a,y) \mlor Q(b,z) \mlor a \not\approx b \mlor y \not\approx z
&& \text{(1,2)}\\ 
3'.\ & 
\hphantom{\neg} P(a,y) \mlor Q(b,y) \mlor a \not\approx b 
&& \text{(norm 3)}
\end{align*}
Clause $3'$. does not satisfy \emph{T2} and is therefore not prefix stable for variables. We note however all clauses satisfy \emph{L1}--\emph{L4}.
The reason is that, in the inference rules of~$\cresol$, 
unification has been replaced
by constraints and inferences are performed even when the literals
resolved or factored upon are not unifiable.
Unification is sufficient for refutation finding
(and uniform interpolation where the target language does not include
equality) 
but not for SOQE where
we are interested in solutions expressed in first-order logic with equality.
The SOQE solution of eliminating~$X$ from clauses 1.\
and 2.\ is clause $3'$.\ which says that if 
$a \approx b \mlimpl (P(a,y) \mlor Q(b,y))$,
whereas the uniform interpolant in first-order logic
without equality is $\top$.

\begin{theorem}
\label{theorem_BPLplus_closed_under_SCAN_rules}
\BPLplus is closed under the rules in the system $\cresol$
in which the redundancy elimination rules are those listed in
Table~\ref{table_simplification_rules}, i.e., tautology deletion,
subsumption deletion and condensation.
\end{theorem}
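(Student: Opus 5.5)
We prove the theorem by checking each rule of $\cresol$ separately. For every rule we show that if the premises satisfy \emph{L1}--\emph{L4}, then so does the conclusion (after eager normalisation by constraint elimination). Two cases are immediate. Deletion rules (purified clause deletion, extended purity deletion, tautology deletion, subsumption deletion) only remove clauses, so they trivially preserve membership in \BPLplus. Condensation replaces $C$ by a subclause $D$ of $C$. Conditions \emph{L1}--\emph{L3} concern pairs of literals and positions of constants, so they are inherited by subclauses. For \emph{L4}, a variable $v$ that was not singular in $C$ may become singular in $D$, but singularity is explicitly one of the allowed options, so the condition still holds. The work is therefore in \cresol-resolution and \cresol-factoring followed by normalisation.

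\textbf{Resolution before normalisation.} Consider premises $C\lor X(\overline s)$ and $D\lor\neg X(\overline t)$ with disjoint variables, and the resolvent $C\lor D\lor\overline s\foneq\overline t$. Conditions \emph{L1}--\emph{L3} hold for the non-constraint part $C\lor D$. Any pair of literals with one from $C$ and one from $D$ shares no variables, so \emph{L1} is vacuous for it. For \emph{L2} I will argue that the clause is still acceptable. The key point is that by \emph{L3} the constants $s_i,t_i$ sit at position $i$. Each new constraint $s_i\foneq t_i$ pairs terms from the same argument position $i$. Each of $s_i,t_i$ is a constant of index $i$, a variable occurring at position $i$ (by \emph{L1} in its own premise), or a variable that is singular in the conclusion (if $X(\overline s)$ was its only occurrence). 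So \emph{L4} holds. Old constraints remain position compatible, except that a variable that was non-singular may now occur only in a constraint. This is again covered by the singular option, or by the variable's remaining occurrence at the same position. Factoring is handled in the same way, with the removed literal $X(\overline t)$ in place of the second premise.

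\textbf{Normalisation, the main obstacle.} Constraint elimination applies an mgu for each $s_i\foneq t_i$ in which at least one side is a variable. Constraints between two distinct constants stay, and those between identical terms disappear. By position compatibility, every binding either sends a position-$i$ variable to a position-$i$ variable or to a constant of index $i$, or instantiates a singular variable. So \emph{L1} and \emph{L3} are preserved: variables never move to a different position, and constants never land outside their index. For \emph{L2} I proceed by induction on the position. Take a variable $y$ at position $i$ in two literals whose prefixes come from different premises. The prefixes of $X(\overline s)$ and $X(\overline t)$ are exactly what was equated by the constraints at positions $k<i$. By \emph{L2} in each premise, the prefix of $y$ agrees, at each position $k$, either with $s_k$ or $t_k$ or with some constant. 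After the mgu, each pair $s_k,t_k$ becomes identical unless both are constants, in which case the constraint $s_k\foneq t_k$ remains. That is exactly the exception that \emph{L2} permits. The delicate part is making this precise when a prefix term in a premise differs from $s_k$ only because both are constants, as \emph{L2} allows within a single clause. I expect to handle it by strengthening the invariant: at each position $k$, all non-constraint literals of a clause carry either the same variable or some constant. Instantiating variables to constants preserves this property, and so do identifications of position-$k$ variables. Finally I would check that the resulting constraints satisfy \emph{L4} by the same position argument, which also yields the example clause $3'$ as a typical instance.
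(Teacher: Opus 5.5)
Your rule-by-rule decomposition is sound. The first half of your \emph{L2} argument, which ties each literal's prefix to $\overline s$ resp.\ $\overline t$ via \emph{L2} in its own premise, is the right observation. The step that fails is the strengthened invariant you propose for the ``delicate part''. The property ``at each position $k$, all non-constraint literals of a clause carry either the same variable or some constant'' is false in \BPLplus. The ordered sentence $\forall x_1(\forall x_2 P(x_1,x_2)\lor\forall x_2 Q(x_1,x_2))$ has clausal form $P(x,y)\lor Q(x,z)$. This clause satisfies \emph{L1}--\emph{L4} but has two distinct variables at position~2; already $P(x)\lor Q(y)$ is a counterexample. Input clauses violate the property, so it cannot be carried through a derivation. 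Restricted to literals that share a later variable, it merely restates \emph{L2}.

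You do not need it. Let $\sigma$ be the normalising substitution, and let $L_C$, $L_D$ be literals from the two premises containing $s_i$ resp.\ $t_i$ at position $i$. These are the only cross-premise pairs that share a variable after $\sigma$. For $k<i$, \emph{L2} in $C$ gives $L_C|_k=s_k$ if $s_k$ is a variable, and $L_C|_k$ a constant if $s_k$ is a constant; likewise for $L_D$ and $t_k$. There are three cases:
\begin{itemize}
\item if $s_k,t_k$ are both variables, then $L_C|_k\sigma=s_k\sigma=t_k\sigma=L_D|_k\sigma$;
\item if exactly one, say $t_k$, is a variable, then $t_k\sigma=s_k$, and both $L_C|_k\sigma$ and $L_D|_k\sigma$ are constants;
\item if both are constants, then so are $L_C|_k$ and $L_D|_k$.
\end{itemize}
No induction on positions is needed. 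You should, however, make two further points explicit:
\begin{itemize}
\item Exhaustive $\Rightarrow_{\mathrm{CEl}}$ amounts to this single $\sigma$. The premises are normalised, and $\overline s,\overline t$ are linear and variable-disjoint, so each variable occurs in at most one non-trivial constraint. The paper argues this independence explicitly.
\item Pairs of literals from the same premise keep \emph{L2}, because in resolution $\sigma$ never identifies two distinct variables of one premise. In factoring it does, so there you must split literals by whether they contain $s_i$ or $t_i$, not by premise.
\end{itemize}

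With this repair your route differs from the paper's. The paper proves a lemma about a single binding $u_{j_k}\mapsto v_{j_k}$ applied to an arbitrary set of argument sequences satisfying \emph{L1} and \emph{L2}. It processes the constraints left to right, and reuses the lemma for condensation by viewing $\mathrm{cond}(C)$ as a factor. Your global analysis of the composed substitution, by origin of literals, is shorter and independent of the elimination order. Your observation that $\mathrm{cond}(C)$ is by definition a subclause makes the paper's detour unnecessary: \emph{L1}--\emph{L4} are simply inherited, since constraints in normalised clauses are between constants. What the paper's lemma buys is a reusable statement about unifiers of argument sequences.
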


To prove this theorem, we need
to show that each inference step in SCAN preserves prefix-stability for
variables when constants are disregarded (\emph{L1} and \emph{L2}),
constants do not change positions in non-constraint literals (\emph{L3}) and are
position compatible in constraints (\emph{L4}).
We consider the inference rules constraint resolution, constraint
factoring and eager constraint elimination in turn.

\subparagraph{Constraint resolution}
Let $C = X(\overline u) \mlor C' \mlor C'^{\not\approx}$ and
$D = \neg X(\overline v) \mlor D' \mlor D'^{\not\approx}$
be variable disjoint copies of clauses in \BPLplus and assume they are both
normalised, i.e., contraint elimination is not applicable.
Suppose constraint resolution is applicable to $C$ and $D$.
This means $\overline u$ and $\overline v$ have the same length. 
Assume $C'^{\not\approx}$ and~$D'^{\not\approx}$ are the respective
subclauses containing the constraints of $C$ and $D$.

A resolvent of $C$ and $D$ is
$$C' \mlor C'^{\not\approx} \mlor D' \mlor D'^{\not\approx} \mlor
\overline u \not\approx \overline v.$$
The part $C' \mlor C'^{\not\approx} \mlor D' \mlor D'^{\not\approx}$ 
is the disjoint union of subclauses of $C$ and $D$, which implies
\emph{L1}, \emph{L2}, \emph{L3} and \emph{L4} are preserved.
The positions of the terms in each of the new constraints $u_i \not\approx v_i$
are their respective argument positions in the literals resolved upon,
which are the same.
We can conclude the resolvent satisfies \emph{L4} and belongs to~\BPLplus.

\subparagraph{Constraint factoring}
Similarly, let $C = (\neg)X(\overline u) \mlor (\neg)X(\overline v) \mlor C'$
be any normalised clause in \BPLplus.
Suppose constraint factoring is applicable and produces factor 
$$(\neg)X(\overline u) \mlor C' \mlor \overline u \not\approx \overline v.$$
The part $X(\overline u) \mlor C'$
is unchanged and satisfies \emph{L1}--\emph{L4}.
We note that $u_i$ and $v_i$ of each constraint $u_i \foneq v_i$ occur in the same argument
position in the literals $X(\overline u)$ and $X(\overline v)$.
This implies \emph{L4} holds and the factor is a \BPLplus-clause.

\subparagraph{Constraint elimination}
Constraint elimination applies to a clause 
$C \mlor \overline u \foneq \overline v$, 
where~$C$ is normalised and therefore does not contain any
constraints involving variables.
Suppose $\overline u \foneq \overline v$ is the disjunction
\begin{equation}
\label{constraint_disjunction}
u_1 \not\approx v_1 \mlor \ldots \mlor u_n \not\approx v_n.
\end{equation}
Because $\overline u$ and $\overline v$ are variable disjoint and
linear (since they satisfy \emph{L1} and \emph{L2}), 
the bindings $\sigma$ produced for unifiable pairs $u_k \not\approx v_k$, i.e., 
$\sigma = \{ u_k \mapsto v_k \}$ or $\sigma = \{ v_k \mapsto u_k \}$,
are independent in the following sense:
No variable changed in a binding occurs on the right of another binding
of the bindings produced during $\Rightarrow_{\mathrm{CEl}}^*$, i.e.,
exhaustive application of constraint elimination to the clause.
This implies that the bindings and individual contraint elimination
steps can be applied in any order to the
subclause~(\ref{constraint_disjunction})
leading to the same normalised clause.
Without loss of generality we assume eager constraint elimination
is performed in the order 
from left to right in~(\ref{constraint_disjunction}).

Observe that constraints of the form $a \foneq a$ or $x \foneq x$
(the latter is a possibility in a factor) are simply dropped
during constraint elimination (in this case the binding is~$\sigma=\emptyset$).
$u_k \not\approx v_k$ is not unifiable when the terms are distinct
constants and $u_k$ and~$v_k$ are left untouched and the constraint
remains in the result of applying $\Rightarrow_{\mathrm{CEl}}^*$.

The following lemma shows that a binding $\sigma$ during constraint elimination from
left to right preserves \emph{L1} and \emph{L2}.
It also trivially preserves \emph{L3} and \emph{L4}.

Let $A$ be a set of argument sequences in the signature of \BPLplus.
We say two argument sequences $s$ and $t$ (of equal length) are
\emph{$k$-equal} if $s$ and $t$ are equal except possibly at
position $k$, that is, for every position $i \not= k$, $s|_i = t|_i$.

Recall, the preﬁx of $u_i$ in $\overline u$ is the empty sequence $\varepsilon$ if $i = 1$,
and $(u_1, \ldots, u_{i-1})$, otherwise.
We define the \emph{variable preﬁx} of $u_i$ in $\overline u$ as 
the sequence $(u_{j_1}, \ldots, u_{j_{k-1}})$ of all variables in
$\overline u$, where
$j_1 < \ldots < j_{k-1} < i$ and all the other terms in
the prefix of $u_i$ are constants.
If none of the $u_1, \ldots, u_{i-1}$ are variables the variable
prefix is the empty sequence $\varepsilon$.

\begin{lemma}
\label{L1_L2_set_binding}
Let $(u_1 \ldots u_m)$ and $(v_1 \ldots v_n)$ be two argument sequences
in $A$ with the same variable prefixes of $u_{j_k}$ and $v_{j_k}$ with
$u_{j_k} \not= v_{j_k}$, i.e.,
for some $k > 0$, the variable positions $j_1, \ldots, j_{k}$ in 
$\overline u$ and $\overline v$ where $j_1 < \ldots < j_{k}$ we have
that
\begin{equation}
\label{L1_L2_set_binding_condition}
u_{j_1} = v_{j_1}, \ldots, u_{j_{k-1}} = v_{j_{k-1}} \quad \text{but} \quad  u_{j_k} \not= v_{j_k}.
\end{equation}
Furthermore, assume that $u_{j_k}$ is a variable.
Let $\sigma$ be the substitution $\{ u_{j_k} \mapsto v_{j_k} \}$.
Then $A\sigma$ satisfies L1 and L2, provided $A$ does.
\end{lemma}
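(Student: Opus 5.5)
The plan is to check \emph{L1} and \emph{L2} for every pair of sequences in $A\sigma$ directly, comparing each pair with its preimage in $A$. Write $x$ for the variable $u_{j_k}$ and $t$ for $v_{j_k}$. Since $A$ satisfies \emph{L1}, the variable $x$ occurs only at position $j_k$ in every sequence of $A$. If $t$ is a variable, it too occurs only at position $j_k$. If $t$ is a constant, position compatibility (\emph{L3}) forces its index to be $j_k$. So applying $\sigma$ changes sequences of $A$ only at position $j_k$: every $\overline w\in A$ is $j_k$-equal to $\overline w\sigma$. This immediately gives \emph{L1} for $A\sigma$. Any variable occurring in $A\sigma$ either occurred in $A$ unchanged, or is $t$. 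In both cases all its occurrences sit at a single position, namely its original position or $j_k$.

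For \emph{L2}, take two sequences $\overline w\sigma$, $\overline w'\sigma$ and a variable occurring in both at position $i$. I would split into cases.

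\textbf{Case 1: $i<j_k$, or $i>j_k$ with the variable not equal to $t$.} The variable already occurred at position $i$ in $\overline w$ and $\overline w'$. By \emph{L2} for $A$, the two sequences agree at all earlier positions, up to pairs of constants. Applying $\sigma$ preserves identical terms, and a pair of constants stays a pair of constants. So \emph{L2} is inherited.

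\textbf{Case 2: the variable is $t$ at $i=j_k$, i.e.\ both occurrences are new or old occurrences of $t$.} For example, one comes from $x$ and the other from $t$. Here we must show the prefixes agree modulo constants. A sequence containing $x$ at $j_k$ has, by \emph{L2} in $A$, the same variable-relevant prefix as $\overline u$. A sequence containing $t$ has the same prefix as $\overline v$. The hypothesis (\ref{L1_L2_set_binding_condition}) says that $\overline u$ and $\overline v$ agree at all variable positions before $j_k$. I also need that at positions before $j_k$ where one of them has a variable, the other does not have a constant. That should follow from \emph{L2} applied to the shared variables, using that $j_1,\dots,j_{k-1}$ enumerate the variable positions of both. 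Chaining these equalities gives \emph{L2}.

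\textbf{Case 3: a variable occurs at a position $i>j_k$, so the prefix now contains $t$ where it previously contained $x$ or $t$.} If both $\overline w$ and $\overline w'$ contain this variable, then by \emph{L2} in $A$ they already agreed at $j_k$, up to constants, before substitution. So after substitution they still agree.

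I expect the main obstacle to be Case 2: making precise that ``same variable prefixes'' of $u_{j_k}$ and $v_{j_k}$ transfers to all sequences in $A$ containing $x$ or $t$. The difficulty is that \emph{L2} only compares sequences sharing a variable. The delicate subcase is when $t$ is a constant, since a constant carries no prefix information. There, however, after substitution no new variable occurrences arise at $j_k$, so only Case 3 is relevant. In Case 3, $x$ and the constant $t$ are both replaced consistently, and a disagreement between the constant $t$ and another constant at position $j_k$ is permitted by \emph{L2}. I would first handle the variable-$t$ subcase by chaining \emph{L2} through $\overline u$ and $\overline v$, and treat the constant subcase separately.
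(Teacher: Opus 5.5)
Your proof is correct. It follows the same basic route as the paper: compare each pair $\overline w\sigma,\overline w'\sigma$ with its preimage in $A$, using \emph{L1} for $u_{j_k}$ to see that $\sigma$ changes sequences only at position $j_k$. Where you differ is the case split, and in one place your argument is more careful than the paper's.

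\paragraph{Comparison with the paper.} The paper fixes $s,t\in A$, assumes $s|_{j_k}=u_{j_k}$, and splits on whether $t|_{j_k}\in\{u_{j_k},v_{j_k}\}$. You split instead on whether the shared variable is $v_{j_k}$. With your split, every configuration except one is inherited directly from $A$. The exception is when one sequence had $u_{j_k}$ and the other had $v_{j_k}$ at position $j_k$. That is exactly where hypothesis (\ref{L1_L2_set_binding_condition}) is needed. The paper handles this case with a bare ``by \emph{L2} we have $s|_l=t|_l$'', although $s$ and $t$ share no variable at $j_k$ in $A$. It also remarks that $i=j<j_k$ there, which overlooks $i=j=j_k$. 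The paper never explicitly invokes (\ref{L1_L2_set_binding_condition}). Your chaining through $\overline u$ and $\overline v$ is what actually justifies that step.

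\paragraph{Tidy-ups.} First, drop the appeal to \emph{L3}. It is not a hypothesis of the lemma, and ``position compatibility'' is the paper's name for \emph{L4}. The $j_k$-equality of $\overline w$ and $\overline w\sigma$ already follows from \emph{L1} applied to $u_{j_k}$ alone.

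Second, phrase Case~1 simply as ``the shared variable is not $v_{j_k}$'', at any position. Your current split omits the case of a variable other than $v_{j_k}$ shared at position $j_k$. The Case~1 argument covers that case verbatim, and it also subsumes Case~3.

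Third, the hedge in Case~2 (``should follow from \emph{L2}'') is unnecessary. ``Same variable prefixes'' together with \emph{L1} means $\overline u$ and $\overline v$ have variables at exactly the positions $j_1,\ldots,j_{k-1}$ before $j_k$. So at every $l<j_k$ they are identical or both constants. The relation ``identical or both constants'' is an equivalence relation on terms: each variable forms its own class, and all constants form one class. Hence the chain $\overline w\equiv\overline u\equiv\overline v\equiv\overline w'$ on positions $<j_k$ is just transitivity, and $\sigma$ leaves those positions untouched.
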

To prove this lemma, we 
consider two arbitrary sequences in $A\sigma$.
They are of the form $s \sigma$ and $t \sigma$ with $s$ and $t$ some
sequences in $A$.
For $s$ and $t$ conditions \emph{L1} and \emph{L2} hold, and
we want to show they also hold for the sequences $s \sigma$ and $t \sigma$.

\begin{lemma}
The sequences $s \sigma$ and $s$ are $j_k$-equal and differ only when $s|_{j_k} = u_{j_k}$.
\end{lemma}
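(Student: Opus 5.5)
The claim is that $s\sigma$ and $s$ are $j_k$-equal and differ only when $s|_{j_k} = u_{j_k}$. I would prove it by locating every occurrence of the variable $u_{j_k}$ in $s$, since $\sigma = \{u_{j_k}\mapsto v_{j_k}\}$ only changes positions of $s$ occupied by $u_{j_k}$. Constants and all other variables are fixed by $\sigma$, so for every position $i$ we have $(s\sigma)|_i \neq s|_i$ only if $s|_i = u_{j_k}$.

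The key step is to show that whenever $u_{j_k}$ occurs in $s$, it occurs only at position $j_k$. The sequence $s$ and the sequence $(u_1,\ldots,u_m)$ both belong to $A$, and $A$ satisfies \emph{L1}. Since $u_{j_k}$ occurs at position $j_k$ in $\overline u$, condition \emph{L1} forces any occurrence of the same variable in $s$ to be at position $j_k$ as well. It follows that $s|_i = u_{j_k}$ is impossible for every $i \neq j_k$.

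Combining the two observations, $(s\sigma)|_i = s|_i$ for all $i\neq j_k$, so the two sequences are $j_k$-equal. At position $j_k$ itself, $(s\sigma)|_{j_k}$ differs from $s|_{j_k}$ exactly when $s|_{j_k} = u_{j_k}$, and in that case $(s\sigma)|_{j_k} = v_{j_k}$. This difference is genuine because $v_{j_k} \neq u_{j_k}$ by assumption~(\ref{L1_L2_set_binding_condition}).

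One point needs care: the sequences must be treated as members of a common set $A$ to which \emph{L1} applies. If $A$ is the set of argument sequences of a single clause, \emph{L1} is stated for pairs of literals in that clause, and $\overline u$ and $s$ both come from $A$, so it applies directly. If $\overline u$ originates from the other premise of the inference, then the variables of the two premises are disjoint and $u_{j_k}$ does not occur in $s$ at all, so $s\sigma = s$ and the claim holds trivially. I expect this bookkeeping to be the only real obstacle; the rest is immediate.
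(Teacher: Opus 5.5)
Your proof is correct and is essentially the paper's argument: $\sigma$ only moves $u_{j_k}$, \emph{L1} applied to $s$ and $\overline u$ within $A$ confines every occurrence of $u_{j_k}$ in $s$ to position $j_k$, and $v_{j_k}\neq u_{j_k}$ makes the change at $j_k$ genuine. Your final bookkeeping paragraph is not needed, since the lemma's hypothesis already gives \emph{L1} for the whole set $A$, which contains both $s$ and $\overline u$.
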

\begin{proof}
$\sigma$ affects only the variable $u_{j_k}$ and in any sequence of $A$, $u_{j_k}$
occurs only at position~${j_k}$ else condition \emph{L1} is violated.
Hence, if $u_{j_k}$ occurs in $s$ then (i)~$s|_{j_k} = u_{j_k}$, 
(ii)~for any $l \not= {j_k}$, $s|_l \not= u_{j_k}$, and 
(iii)~$s \sigma|_{j_k} = v_{j_k} \not= s|_{j_k}$.
\end{proof}
We continue the proof of Lemma~\ref{L1_L2_set_binding}.
The lemma just proved is true for $t \sigma$ and $t$, as well.
If neither $s$ nor $t$ contain the variable $u_{j_k}$ then the substitution
$\sigma$ does not affect $s$ and $t$.
Then $s \sigma = s$ and $t \sigma = t$.
In this case $s \sigma$ and $t \sigma$ trivially satisfy \emph{L1} and~\emph{L2}
(since $s$ and $t$ do).

Without loss of generality let us assume that $s|_{j_k} = u_{j_k}$.
Then $s|_{j_k} \sigma = v_{j_k}$.
We distinguish two cases:
\begin{enumerate}[(i)]
\item
$t|_{j_k} \not= u_{j_k}$ and $t|_{j_k} \not= v_{j_k}$.
$\sigma$ leaves $t$ unchanged so that $t \sigma = t$.
Suppose $s \sigma|_i = t \sigma|_j$ is a variable.
Then $s \sigma|_i = t \sigma|_j = t|_j$.
Also, $j \not= {j_k}$ and $i \not= {j_k}$, since otherwise $t|_j =
v_{j_k}$
which contradicts our assumption.
This implies $s|_i = s \sigma|_i =  t \sigma|_j = t|_j$.
By \emph{L1} which holds for $s$ and $t$ we get that $i = j$.
By \emph{L2} for any $l < i=j$ we have $s|_l = t|_l$ unless both $s|_l$ and $t|_l$
are (possibly distinct) constants.
Suppose $s|_l$ and $t|_l$ are not both constants.
Then, $i = j < {j_k}$ since otherwise, if $i = j = {j_k}$ then $s|_i = u_{j_k}
\not= t|_i$ which is a contradiction, or if $i = j > {j_k}$ then since
$s|_{j_k} \not= t|_{j_k}$ by assumption, $s$ and~$t$ contradict \emph{L1} and \emph{L2}.
Consequently by the Lemma $s \sigma|_l = s|_l = t|_l = t \sigma|_l$.
Therefore, conditions \emph{L1} and \emph{L2} are true for case (i).

\item
$t|_{j_k} = u_{j_k}$ or $t|_{j_k} = v_{j_k}$.
Then $t \sigma|_{j_k} = v_{j_k} = s \sigma|_{j_k}$.
Suppose $s \sigma|_i = t \sigma|_j$ is a variable.
\begin{enumerate}[a.]
\item
If $i = {j_k}$ then $s \sigma|_i = v_{j_k} = t \sigma|_j$.
Then, either $t|_j = u_{j_k}$ or $t|_j = v_{j_k}$.
In either case, it follows that $j = {j_k}$ and hence $i=j$.
\item
If $j = {j_k}$ then by a similar argument $i= j$.
\item
If $i \not= {j_k}$ and $j \not= {j_k}$ then the Lemma implies $s \sigma|_i =
s|_i$ and $t \sigma|_j = t|_j$.
Since $s \sigma|_i = t \sigma|_j$ we have $s|_i = t|_j$ and it follows
by \emph{L1} that $i = j$.
\end{enumerate}
Therefore, $s \sigma$ and $t \sigma$ satisfy \emph{L1}.

Let $l < i=j$ be arbitrary and suppose $s|_l$ and $t|_l$ are variables
(by \emph{L2} it cannot be that one is a variable and the other is a constant).
By \emph{L2} we have that $s|_l = t|_l$.
\begin{enumerate}[a.]
\item
Consider the case that $l \not= {j_k}$.
By the Lemma $s \sigma|_l = s|_l$ and $t \sigma|_l = t|_l$.
Since $s|_l$ and $t|_l$ coincide, we conclude $s \sigma|_l =
t \sigma|_l$.
(Note that if $t|_{j_k} = v_{j_k}$ then $s|_{j_k} = u_{j_k} \not= t|_{j_k}$ and
consequently $i=j < {j_k}$.)
\item
For $l = {j_k}$:
$s \sigma|_l = v_{j_k} = t \sigma|_l$ by assumption.
\end{enumerate}
\end{enumerate}
This completes the proof of Lemma~\ref{L1_L2_set_binding}.

We can conclude that the normalisation obtained by eager constraint
elimination of any resolvent or factor in $\cresol$ preserves
\emph{L1}--\emph{L4} and belongs to \BPLplus.

Trivially, the class \BPLplus is closed under clause deletion rules,
including purified clause deletion, extended purity deletion,
tautology deletion and subsumption deletion. To see that the
condensation of a \BPLplus-clause is a \BPLplus-clause 
notice that \emph{the condensation of a clause is a factor that subsumes it},
and use Lemma~\ref{L1_L2_set_binding} 
to show the following:
\begin{quote}
\emph{Let $A$ be a set of argument sequences in the language of
\BPLplus-clause and assume $\sigma$ be an
idempotent unifier of two sequences $\overline u$ and $\overline v$ in $A$.
Then, $A\sigma$ satisfies properties L1 and~L2 if $A$ does.}
\end{quote}
This allows us to show standard factoring preserves properties \emph{L1} and
\emph{L2}. Preservation of \emph{L3} and \emph{L4} is not difficult to see.

This completes the proof of
Theorem~\ref{theorem_BPLplus_closed_under_SCAN_rules}.

\bigskip
We now argue that SCAN is a decision procedure for eliminating predicate
variables $\overline X$ from any finite set $N$ of \BPLplus clauses.
This requires us to show that any $\cresol$-saturated set $N^\infty$
of $N$ is finite, which follows if the depth and width
of any derived clauses are finitely bounded.

Trivially, the clause (term) depth is bounded as \BPLplus clauses do not
contain complex terms, only variables and constants.

The existence of a finite bound of the
size of derived clauses, i.e., the number of literals in derived clauses
does not grow indefinitely, is tied with the maximum number of
variables which can occur in any variable indecomposable (sub)clause.

Let $A$ be a set of sequences (or a clause) in \BPLplus.
The \emph{variable partition} of $A$ is the finest partition of $A$
into disjoint subsets of sequences (subclauses) which do not share common variables.
The blocks in the variable partition are said to be
\emph{variable indecomposable} or \emph{split}.
It suffices to show there is a bound on the number of variables
occurring in any variable indecomposable and condensed $A$.
Because if there is, then there can only be a limited number of
non-variant variable indecomposable subsets in \BPLplus.
In particular, there can only be a limited number of non-variant variable
indecomposable subsets of any condensed $A$.
This implies that any condensed $A$ in \BPLplus is bounded in size.
Consequently, \BPLplus is bounded in size.

Now, w.l.o.g.\ we can assume that all clauses are normalised
and all constraints are between constants.
We can further limit our attention to the non-constraint literals
in clauses.
It is not difficult to see that the maximum number of variables
occurring in any variable indecomposable \BPLplus-clause is the same
as the variable bound in any \BPL-clause (or any \BPL-clause satisfying
the fixed constant position condition).
Variable indecomposable normalised \BPLplus-clauses have in general
more literals than \BPL-clauses satisfying the fixed constant position
condition, because of the presence of the
constraints but also because of condition \emph{L2} 
which allows combinatorially many more literals in a clause involving the constants in the input set.
Since there are only finitely many constants and because of the
finite bound on variables the size of \BPLplus-clauses is finitely bounded.

\begin{theorem}
SCAN is a decision procedure for eliminating predicate
variables $\overline X$ from any finite set $N$ of \BPLplus clauses,
i.e., it decides SOQE for \BPLplus and computes
an answer in first-order logic with equality.
\end{theorem}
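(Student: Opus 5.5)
The proof will combine three results already established: the correctness theorem for SCAN, the closure result (Theorem~\ref{theorem_BPLplus_closed_under_SCAN_rules}), and the finite bound on condensed \BPLplus-clauses argued just above. The core of the argument is termination. Once every SCAN derivation from a finite \BPLplus set $N$ is shown to be finite, correctness ($\exists \overline X N \mlequiv M$) immediately makes SCAN a decision procedure and shows that it always returns an answer. It then remains to check that this answer is a first-order formula with equality.

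\textbf{Step 1: all derived clauses stay in a finite signature.} Using Theorem~\ref{theorem_BPLplus_closed_under_SCAN_rules}, I will argue by induction along the derivation that every clause produced from $N$ by $\cresol$-resolution, $\cresol$-factoring, eager constraint elimination and the redundancy rules is a \BPLplus-clause. Inspecting the rules shows that none of them introduces new predicate or constant symbols. Hence all derived clauses lie in the \BPLplus fragment over the finite signature of $N$: finitely many predicates, finitely many constants, and the inequality symbol $\foneq$.

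\textbf{Step 2: finitely many clauses up to variants.} Next I will apply the size bound established before the theorem. Term depth is $1$. After normalisation, constraints hold only between constants, so there are finitely many possible constraint literals. Each variable-indecomposable condensed block has a bounded number of variables. Since the literals in a block range over finitely many predicates and finitely many constants, there are only finitely many blocks up to variants. A condensed clause contains no two variant blocks, because such duplicates would be removed by condensation. It follows that there are finitely many condensed \BPLplus-clauses over the signature, up to variants.

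\textbf{Step 3: termination of the derivation.} Since condensation and subsumption (variant) deletion are applied eagerly, every clause set occurring in the derivation is a subset of this finite set, modulo variants. Each purification sub-loop (a) can therefore add only finitely many non-redundant clauses before it saturates, after which the clause $K$ is purified and deleted. Each outer iteration strictly reduces the number of occurrences of a given $X$-literal, taken modulo variants, in the finite space. In addition, purified clauses cannot be re-derived as non-redundant clauses. A well-founded measure therefore exists: the finite set of non-variant clauses still containing $\overline X$ that have not yet been purified, ordered lexicographically with the set of derivable clauses. This step, making rigorous why the outer loop cannot cycle by re-deriving deleted $X$-clauses, is the main obstacle. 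I would first try to handle it via the standard fairness/redundancy argument for saturation, namely that any re-derived clause is redundant with respect to the conclusions already present.

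\textbf{Step 4: the answer.} On termination the set $M$ is free of $\overline X$. Because \BPLplus-clauses contain only variables and constants, $M$ can be unskolemised into an $\exists^*\forall^*$ sentence with equality. By the correctness theorem this sentence is equivalent to $\exists\overline X N$.
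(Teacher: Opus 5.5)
Your overall route is the paper's: closure of \BPLplus under the $\cresol$ rules, a finite bound on condensed \BPLplus-clauses modulo variants (your Step~2 is essentially the paper's argument, plus the explicit remark that condensation removes variant blocks), then correctness and unskolemisation. The gap is in Step~3, which you flag yourself, and the claims you make there are false for the interleaved purify-and-delete loop, even on \BPLplus. Take $N=\{X(a)\lor Q(a),\ \neg X(a)\lor X(b),\ \neg X(b)\lor X(a)\}$ with $a,b$ of index~$1$. First purify $X(a)\lor Q(a)$ upon $X(a)$. This adds $Q(a)\lor X(b)$ and deletes $X(a)\lor Q(a)$; the other resolvent $Q(a)\lor X(a)\lor a\foneq b$ is subsumed by the parent. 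Next purify $Q(a)\lor X(b)$ upon $X(b)$. This adds $Q(a)\lor X(a)$, which no present clause subsumes, and deletes $Q(a)\lor X(b)$; the other resolvent is again subsumed by the parent. Extended purity deletion never applies along the way. The clause set is now exactly $N$ again. So no measure on the current clause set can decrease, and a purified clause has been re-derived as a non-redundant clause. Your proposed repair does not apply either. Purified clause deletion is not a redundancy deletion, since it is sound only modulo $\exists\overline X$. Nothing left in the set makes the re-derived clause redundant, so the standard fairness argument has nothing to work with.

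The missing idea is the reduction the paper's argument rests on: termination is reduced to finiteness of the $\cresol$-saturated set $N^\infty$. Saturate $N$ under all $\cresol$-inferences upon $\overline X$-literals with eager normalisation. During saturation, allow only tautology deletion, subsumption deletion and condensation. Then every clause ever deleted is subsumed by some clause still present, because subsumption is transitive and a condensate subsumes its clause. Hence a newly kept clause is never a variant of an earlier kept one, and the bound from Step~2 caps the number of additions. Only after saturation are the $\overline X$-clauses discarded, as in the saturate-then-delete formulation of SCAN. If you want to keep the interleaved loop, you need an extra selection or fairness condition that excludes such cycles. Finiteness of the clause space alone does not give it.
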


This implies that SCAN is SOQE-complete for \BPLplus (and the clausal form of the ordered
fragment resulting from constant Skolemisation).

\section{Extracting Uniform Interpolants in Basic Path Logic} 
\label{section_UI}
In this section we offer a construction of uniform interpolants (in equality-free first-order logic) using \BPLplus clauses computed by SCAN.
We have shown that, for any finite set of \BPLplus clauses (and any finite list of predicate variables), SCAN always terminates with a finite set of \BPLplus clauses, and thus we have a solution to SOQE in this case. 
However, since \BPLplus clauses in general are not expressible in \BPL and the ordered fragment, both of which are equality-free, some extra work is needed in order to extract uniform interpolants from \BPLplus clauses.

As constraint elimination is applied eagerly, all inequalities occurring in the output set of clauses are between (distinct) constants, e.g. $a\foneq b$. Fortunately, replacing these inequalities with $\top$, or equivalently, deleting all clauses containing such inequalities does not cause a loss of logical consequences in equality-free first-order logic. The correctness of this deletion is discussed below.
Again, we make use of the fact that models of equality-free formulas can be inflated.
\begin{lemma}
\label{lemma_constat_inequalities}
Let $\mathfrak{A}$ be a structure (with domain $A$) interpreting a relational signature with constants. There is a structure $\mathfrak{B}$ (interpreting the same signature) such that: (i) no two constants denote the same element in $\mathfrak{B}$ and (ii) $\mathfrak{A}$ and $\mathfrak{B}$ satisfy the same equality-free first-order sentences.
\end{lemma}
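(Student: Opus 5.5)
The idea is to use the inflation trick of Lemma~\ref{lemma_inflation}, extended to signatures with constants. Let $K$ be the set of constants of the signature. We may assume $K$ is finite, since the signature is finite; the argument works for any cardinality if we take the product with a set of size $|K|$ instead of $\{1,\dots,n\}$. Put $n=\max(|K|,1)$ and let $\mathfrak{B}$ be the $n$-fold Cartesian product of $\mathfrak{A}$ on the relational part, so the domain is $B=A\times\{1,\dots,n\}$. Fix an injection $\iota\colon K\to\{1,\dots,n\}$ and interpret each constant $c$ in $\mathfrak{B}$ as $c^{\mathfrak{B}}=\langle c^{\mathfrak{A}},\iota(c)\rangle$. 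Condition (i) then holds at once: if $c\neq d$, the second components $\iota(c)$ and $\iota(d)$ differ, so $c^{\mathfrak{B}}\neq c'^{\mathfrak{B}}$ even when $c^{\mathfrak{A}}=d^{\mathfrak{A}}$.

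For (ii), strengthen Lemma~\ref{lemma_inflation} to equality-free formulas that may contain constants. Define the projection $\pi\colon B\to A$ by $\pi\langle a,i\rangle=a$. By construction $\pi(c^{\mathfrak{B}})=c^{\mathfrak{A}}$, and $R^{\mathfrak{B}}$ is exactly the preimage of $R^{\mathfrak{A}}$ under $\pi$ applied componentwise. The claim to prove, by induction on equality-free formulas $\phi(x_1,\dots,x_m)$, is
\[
\mathfrak{B}\models\phi[e_1,\dots,e_m] \iff \mathfrak{A}\models\phi[\pi e_1,\dots,\pi e_m]
\]
for all $e_1,\dots,e_m\in B$. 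The cases are as follows.
\begin{itemize}
\item \emph{Atoms.} A term evaluates in $\mathfrak{B}$ to some $e$, with $\pi e$ the value of the same term in $\mathfrak{A}$ under the projected assignment. Whether the term is a variable or a constant, the atom case then follows from the preimage definition of $R^{\mathfrak{B}}$.
\item \emph{Boolean connectives.} These are immediate from the induction hypothesis.
\item \emph{Quantifiers.} The case $\exists x$ uses that $\pi$ is surjective, since every $a\in A$ equals $\pi\langle a,1\rangle$. The case $\forall x$ is dual.
\end{itemize}
Applying the claim with $m=0$ shows that $\mathfrak{A}$ and $\mathfrak{B}$ agree on all equality-free sentences.

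The only delicate point is that equality is genuinely excluded, and this is where it matters. The formula $c\approx d$ may hold in $\mathfrak{A}$ but fails in $\mathfrak{B}$, and the induction breaks exactly at equality atoms because $\pi$ is not injective. Equality-free atoms never see the second component, so the argument is otherwise routine. This is essentially the proof of Lemma~\ref{lemma_inflation}, with the surjective homomorphism $\pi$ reflecting relations, so I expect no real obstacle beyond setting up the constant interpretation correctly.
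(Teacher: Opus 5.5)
Your proof is correct and is essentially the paper's argument. The paper takes $B=A\times C$ with $C$ the set of constants and $c^{\mathfrak{B}}=\langle c^{\mathfrak{A}},c\rangle$, which is your construction with the index set $\{1,\dots,n\}$ replaced by $C$ itself (so no injection is needed), followed by the same routine induction that your projection $\pi$ makes explicit; your choice $n=\max(|K|,1)$ also handles a constant-free signature, where $A\times C$ would be empty (one small typo: in (i) the second element should read $d^{\mathfrak{B}}$).
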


\begin{proof}

Let $C$ be the set of constants in the signature, and $B=A\times C$. Then we define the structure $\mathfrak{B}$ over the domain $B$ such that:
\begin{itemize}
\item[] For any $m$-ary predicate $R$, any $a_1,\dots,a_m\in A$, and any $c_1,\dots,c_m\in C$, 
$$(\langle a_1,c_1\rangle,\dots,\langle a_m,c_m\rangle)\in R^{\mathfrak{B}} \text{ iff } (a_1,\dots,a_m)\in R^{\mathfrak{A}}.$$
\item[] For any constant $c\in C$, $c^\mathfrak{B}=\langle c^{\mathfrak{A}},c\rangle$.
\end{itemize}
Thus, no two constants denote the same element in $\mathfrak{B}$. Also, a routine induction on the structure of formulas shows that, for any equality-free formula $\phi$ with $m$ free variables, any $a_1,\dots,a_m\in A$, and any $c_1,\dots,c_m\in C$,
$$\mathfrak{A}\models\phi[a_1,\dots,a_m] \iff \mathfrak{B}\models\phi[\langle a_1,c_1 \rangle,\dots,\langle a_m,c_m \rangle].$$
Then, in particular, $\mathfrak{A}$ and $\mathfrak{B}$ satisfies the same equality-free sentences over the signature. This completes the proof.
\end{proof}

\begin{theorem}
\label{theorem_neq_cons}
Let $\phi$ be a first-order sentence in clausal form, possibly containing inequalities between distinct constants but no other occurrences of (in)equalities. Let $\phi'$ be the result of replacing all inequalities in $\phi$ by $\top$ (so $\phi'$ is equality-free). 
Then: for any equality-free first-order sentence $\psi$, $\models \phi\to\psi$ iff $\models\phi'\to\psi$.
\end{theorem}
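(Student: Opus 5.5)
The plan is to prove the two directions separately. The `if' direction is a monotonicity argument and the `only-if' direction uses Lemma~\ref{lemma_constat_inequalities}.

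\emph{`If' direction.} Replacing a literal $a\foneq b$ by $\top$ turns each clause containing it into a tautology. It leaves all other clauses untouched, so $\phi'$ is obtained from $\phi$ by deleting clauses. Since $\phi$ is a conjunction of universally closed clauses, $\models\phi\to\phi'$. Hence $\models\phi'\to\psi$ implies $\models\phi\to\psi$.

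\emph{`Only-if' direction.} Assume $\models\phi\to\psi$ with $\psi$ equality-free, and let $\mathfrak{A}\models\phi'$. Since $\phi'$ and $\psi$ are equality-free, I may restrict attention to the relational-with-constants signature of $\phi,\psi$. Apply Lemma~\ref{lemma_constat_inequalities} to $\mathfrak{A}$ to get $\mathfrak{B}$ in which distinct constants denote distinct elements, and which satisfies the same equality-free sentences as $\mathfrak{A}$. In particular $\mathfrak{B}\models\phi'$.

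Next I show $\mathfrak{B}\models\phi$, clause by clause. Consider a clause $\forall\overline x\,(C\mlor a_1\foneq b_1\mlor\dots)$ of $\phi$ containing at least one inequality between distinct constants. In $\mathfrak{B}$ the literal $a_1\foneq b_1$ is true, so the clause holds under every assignment. Every clause without inequalities also occurs in $\phi'$ and so holds in $\mathfrak{B}$. Thus $\mathfrak{B}\models\phi$, hence $\mathfrak{B}\models\psi$. Because $\psi$ is equality-free, the lemma transfers this back, giving $\mathfrak{A}\models\psi$. So $\models\phi'\to\psi$.

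\emph{Main obstacle.} The only delicate point is reading ``clausal form'' correctly. If $\phi$ is an $\exists^*\forall^*$ sentence (Skolem constants unskolemised, or treated as free constants), the same argument goes through. If existentially quantified variables were present, an inequality $a\foneq b$ would still only involve constants of the signature, which the lemma separates. In either case the key is that each inequality, being between distinct constant symbols, is made true by $\mathfrak{B}$ independently of the variable assignment. Beyond this, the argument is bookkeeping: check that the signature used in Lemma~\ref{lemma_constat_inequalities} contains all constants of $\phi$ and $\psi$.
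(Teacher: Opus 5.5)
Your proof is correct and follows the paper's argument. The \emph{if} direction comes from $\models\phi\to\phi'$; the \emph{only-if} direction passes from $\mathfrak{A}\models\phi'$ to the constant-separating structure $\mathfrak{B}$ of Lemma~\ref{lemma_constat_inequalities}, in which every $a\foneq b$ is true, so $\mathfrak{B}\models\phi$, hence $\mathfrak{B}\models\psi$, hence $\mathfrak{A}\models\psi$. The only difference is cosmetic: you check $\mathfrak{B}\models\phi$ clause by clause, whereas the paper invokes the Replacement Theorem with $\mathfrak{B}\models a\foneq b\leftrightarrow\top$.
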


\begin{proof}
First, the `if' direction holds since we obviously have $\models\phi\to\phi'$. For the other direction, suppose that $\models \phi\to\psi$ and $\mathfrak{A}\models\phi'$. Then, by Lemma~\ref{lemma_constat_inequalities}, there exists $\mathfrak{B}$ such that: (i) $\mathfrak{B}\models (a\foneq b\leftrightarrow\top)$ for any distinct constants $a$ and $b$, and (ii) $\mathfrak{B}$ and $\mathfrak{A}$ satisfy the same equality-free sentences. Since $\phi'$ is equality-free, we have by (ii) that $\mathfrak{B}\models\phi'$. Then, by (i) and the Replacement Theorem (see e.g. \cite[Thm. 8.2.1]{Fitting1996}), $\mathfrak{B}\models\phi$, and thus $\mathfrak{B}\models\psi$. Since $\psi$ is equality-free, we also have $\mathfrak{A}\models\psi$, which shows that $\models \phi'\to\psi$.
\end{proof}
In particular, this allows us to replace inequalities between (distinct) constants
by~$\top$ in the \BPLplus clauses returned by SCAN without any change to the logical consequences
in \BPL and the ordered fragment. Notice that replacing a disjunct in a clause by $\top$ amounts to deleting the clause altogether. Moreover, 

\begin{proposition}
\label{prop_neq_del}
If $M$ is a set of \BPLplus clauses returned by SCAN (on a set $N$ of $BPL$ clauses with fixed constant positions as input), and $M'$ is the result of deleting all clauses in $M$ containing inequalities, then $M'$ is a uniform interpolant (for $N$) in $BPL$.
\end{proposition}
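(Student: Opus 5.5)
We need to show that $M'$ is a uniform interpolant of $N$ in \BPL. Concretely: its signature avoids $\overline X$ (or is the intended $\Sigma$); $M'$ is expressible in \BPL; and for every \BPL sentence (clause set) $\chi$ with $\sig(N)\cap\sig(\chi)\subseteq\Sigma$, we have $\models N\mlimpl\chi$ iff $\models M'\mlimpl\chi$. The plan chains three earlier results: SCAN correctness ($\exists\overline X N\mlequiv M$), closure of \BPLplus under the system $\cresol$ (Theorem~\ref{theorem_BPLplus_closed_under_SCAN_rules}), and Theorem~\ref{theorem_neq_cons}.

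\textbf{Step 1 (membership in \BPL).} By termination, $M$ is finite and free of $\overline X$. By Theorem~\ref{theorem_BPLplus_closed_under_SCAN_rules}, $M\psubset\BPLplus$. Since constraint elimination is eager, every remaining inequality is between distinct constants. Each clause of $M'$ has no inequalities and satisfies L1--L3.

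The obstacle is that L2 is weaker than T2: a clause such as $P(a,y)\mlor Q(b,y)$ lies in \BPLplus but not in \BPL. I would argue that such clauses only arise from constraint elimination across distinct constants at the same position, i.e.\ from resolvents carrying a constraint $a\foneq b$. Bindings at a variable position $j_k$ occur only after all earlier positions have been unified or have left constant constraints. So in a normalised clause, any two literals differing at a constant position $l$ before a shared variable stem from an inference that retains a literal $a\foneq b$. Such clauses are deleted in $M'$, and hence the surviving clauses satisfy T1/T2. To make this precise, I would strengthen the closure invariant of Theorem~\ref{theorem_BPLplus_closed_under_SCAN_rules}: if literals of a clause violate T2 at positions $l<i$ with distinct constants $a,b$ at $l$, then the clause contains $a\foneq b$. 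I would check that this invariant is preserved by resolution, factoring, constraint elimination (via Lemma~\ref{L1_L2_set_binding}) and condensation, and that it holds vacuously for the input $N\psubset\BPL$. This is the step I expect to require the most care.

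\textbf{Step 2 (consequences).} Take a \BPL clause set $\chi$ not containing $\overline X$. Since $\chi$ avoids $\overline X$, we have $\models N\mlimpl\chi$ iff $\models\exists\overline X N\mlimpl\chi$, which by correctness of SCAN holds iff $\models M\mlimpl\chi$. Write $M$ as a sentence (universal closure, with constants free). Its only equality occurrences are inequalities between distinct constants, and $\chi$ is equality-free. So Theorem~\ref{theorem_neq_cons} gives $\models M\mlimpl\chi$ iff $\models M''\mlimpl\chi$, where $M''$ replaces those inequalities by $\top$. Each affected clause becomes $\top$, hence $M''\mlequiv M'$.

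\textbf{Step 3 (signature).} The symbols of $M'$ are contained in $\sig(N)\setminus\overline X$. If the definition demands equality $\sig(M')=\Sigma$, unused symbols can be added via trivial tautologies, which are harmless.
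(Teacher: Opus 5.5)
Your overall route is the paper's. You show that every clause of $M$ that is not prefix stable for variables contains an inequality and is therefore deleted, and then you combine correctness of SCAN with Theorem~\ref{theorem_neq_cons}. Your Steps~2 and~3 are fine.

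\textbf{The gap: your Step~1 invariant is false.} You require that a T2-violation with distinct constants $a,b$ at the same position forces $a\foneq b$ into the clause. The paper's own proof asserts exactly this claim, but the preservation check you plan would fail. Take the \BPL input below, with $a,b,c$ all of index~1:
\[
X(a,y)\mlor Y(a,y),\qquad \neg Y(c,z)\mlor P(c,z),\qquad \neg X(b,w)\mlor Q(b,w).
\]
Eliminating $Y$ gives $X(a,y)\mlor P(c,y)\mlor a\foneq c$, which satisfies your invariant. Resolving this with the third clause on $X$ and normalising gives
\[
P(c,y)\mlor Q(b,y)\mlor a\foneq c\mlor a\foneq b .
\]
This clause persists in $M$ whichever of $X$, $Y$ is eliminated first. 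Here $y$ has prefixes $(c)$ and $(b)$, yet $c\foneq b$ does not occur. The cause is that a new binding can join a literal that was already off-prefix in one premise with a literal from the other premise. The constants that end up clashing, $c$ and $b$, were never compared with each other; each was only compared with $a$.

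\textbf{The repair: weaken the invariant to what the proposition needs.} The right invariant is: every normalised derived clause \emph{without inequality literals} satisfies T1 and T2. It is preserved for the following reasons.
\begin{itemize}
\item Constraints in normalised clauses are between distinct constants. Being ground and non-unifiable, they survive constraint elimination and every substitution. They also survive condensation, since $L\sigma=L$ must lie in $\mathrm{cond}(C)$. So any rule applied to a premise carrying a constraint yields a conclusion carrying one.
\item Suppose both premises of a $\cresol$-inference are constraint-free, hence in \BPL by induction. If all constraints of $\overline s\foneq\overline t$ get eliminated, the composed bindings unify $\overline s$ and $\overline t$. In that case the normalised conclusion is a variant of the ordinary resolvent or factor, which is in \BPL by Theorem~\ref{theorem_BPL_closed_under_resol}. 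Otherwise some constant clash $s_k\foneq t_k$ remains in the conclusion.
\item Condensing a constraint-free \BPL clause yields a subclause of it, which is again in \BPL.
\end{itemize}
Together with L3 from Theorem~\ref{theorem_BPLplus_closed_under_SCAN_rules}, this gives $M'\subseteq\BPL$ with fixed constant positions, and the rest of your argument goes through unchanged.
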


\begin{proof}
Let $C\in M$ be a \BPLplus clause in which a variable $u$ is not prefix stable. Then there are two distinct constants $a$ and $b$ occurring in the same position in two prefixes of $u$. By the rules of constraint resolution,
it is easy to verify that $C$ also contains the constraint $a\not\approx b$, and then $C\notin M'$. Thus,
$M'$ is a finite set of clauses in $BPL$ (with fixed constant positions). Also, the correctness of SCAN and Theorem~\ref{theorem_neq_cons} ensures that $M'$ is a uniform interpolant for $N$.
\end{proof}

Therefore, at this stage, we have already shown that uniform interpolants in \BPL can be extracted from the result of SCAN. 
Note that this result is a little finer than Theorem~\ref{UI_BPL} because it establishes the uniform interpolation property within \BPL clauses where constants have fixed positions.

The remaining problem is to extract uniform interpolants \emph{in} the ordered fragment in this framework.  
We have seen in Section~\ref{section_OF_CSk} that each ordered sentence can be transformed into a set of $BPL$ clauses (with fixed constant positions) and such a transformation preserves logical consequences in the ordered fragment in both directions. 
Since $BPL$ clauses contain no functional terms, it is always possible to unskolemise the output of SCAN (after deleting clauses with inequalities) and get an (equality-free) first-order sentence in the $\exists^*\forall^*$-fragment.
The challenge we are facing here is that these sentences are in general not expressible in the ordered fragment. 
For example, let us start from the following ordered sentence as an input:
\begin{multline*}
    \forall x_1\forall x_2 \exists x_3((\neg X(x_1,x_2) \mlor P(x_1,x_2,x_3))\mland (\neg Y(x_1,x_2) \mlor Q(x_1,x_2,x_3))) \\
    \mland \ \forall x_1 \exists x_2 X(x_1,x_2)\mland \ \forall x_1\exists x_2 Y(x_1,x_2).
\end{multline*}
By constant Skolemisation and clausification, we have clauses 1--4 as below.
\begin{align*}
1.\ & \neg X(x,y) \mlor P(x,y,a)\\
2.\ & \hphantom{\neg} X(x,b)\\ 
3.\ & \neg Y(x,y) \mlor Q(x,y,a)\\
4.\ & \hphantom{\neg} Y(x,c)\\
\text{\sout{5.}}\ & P(x,y,a) \mlor x \foneq x' \mlor y \foneq b && \text{(C-res, 1, 2)}\\
5'.\ & P(x,b,a)  && \text{(2 $\times$ CEl)}\\
\text{\sout{6.}}\ & Q(x,y,a) \mlor x \foneq x' \mlor y \foneq c && \text{(C-res, 3, 4)}\\
6'.\ & \hphantom{\neg} Q(x,c,a)  && \text{(2 $\times$ CEl)}
\end{align*}
Applying rules as above, Clauses $5'$ and $6'$ remain, and we get the following first-order formula via the usual unskolemisation (i.e., each Skolem constant is eliminated by introducing an existential quantifier):
$$\exists u \exists v \exists w \forall x (P(x,u,w) \mland Q(x,v,w)).$$
However, it is not an ordered sentence, and we show below that it is not equivalent to any ordered sentence.

We say that a first-order sentence $\phi$ is \emph{invariant under ordered bisimulations} if, for any structures $\mathfrak{A}$ and $\mathfrak{B}$, the following implication holds:
\[\mathfrak{A}\sim\mathfrak{B}\ \Longrightarrow\ \mathfrak{A}\models\phi\iff\mathfrak{B}\models\phi.\]
The following result is included in \cite[Thm. 5]{Bednarczyk2022}. 
\begin{theorem}\label{thm:expressivity}
A first-order sentence is logically equivalent to an ordered sentence iff it is invariant under ordered bisimulations.
\end{theorem}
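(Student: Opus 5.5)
The plan is to prove Theorem~\ref{thm:expressivity} along the lines of van Benthem's characterisation of modal logic: compactness combined with $\omega$-saturated elementary extensions. The left-to-right direction is immediate. If $\phi\equiv\psi$ with $\psi$ an ordered sentence, i.e.\ an ordered formula over $X_0$, and $\mathfrak{A}\sim\mathfrak{B}$, then Lemma~\ref{lemma_bisimulation} with $n=0$ gives $\mathfrak{A}\models\psi$ iff $\mathfrak{B}\models\psi$. Hence $\phi$ is invariant under ordered bisimulations.

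For the converse, assume $\phi$ is invariant and let $\mathrm{OC}(\phi)$ be the set of ordered sentences $\chi$ with $\models\phi\to\chi$. It suffices to show $\mathrm{OC}(\phi)\models\phi$: compactness then yields a finite subset whose conjunction is an ordered sentence equivalent to $\phi$.

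To prove $\mathrm{OC}(\phi)\models\phi$, take $\mathfrak{A}\models\mathrm{OC}(\phi)$ and let $T$ be the set of ordered sentences true in $\mathfrak{A}$.
\begin{enumerate}[(i)]
\item \emph{$T\cup\{\phi\}$ is satisfiable.} Suppose not. By compactness there are $\chi_1,\dots,\chi_k\in T$ with $\models\phi\to\neg(\chi_1\land\dots\land\chi_k)$. This negated conjunction is an ordered sentence, so it lies in $\mathrm{OC}(\phi)$ and is therefore true in $\mathfrak{A}$, a contradiction. So there is $\mathfrak{B}\models\phi$ such that $\mathfrak{A}$ and $\mathfrak{B}$ agree on all ordered sentences (ordered sentences are closed under negation).
\item \emph{Pass to saturated extensions.} Take $\omega$-saturated elementary extensions $\mathfrak{A}^+\succeq\mathfrak{A}$ and $\mathfrak{B}^+\succeq\mathfrak{B}$. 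They still agree on all ordered sentences, and $\mathfrak{B}^+\models\phi$.
\item \emph{Build a bisimulation.} Define $Z\subseteq\bigcup_n(A^{+n}\times B^{+n})$ by $\bar a Z\bar b$ iff $|\bar a|=|\bar b|=n$ and $(\mathfrak{A}^+,\bar a)$, $(\mathfrak{B}^+,\bar b)$ satisfy the same ordered formulas over $X_n$. Then $\varepsilon Z\varepsilon$.
\item \emph{Conclude.} Once $Z$ is shown to be an ordered bisimulation, invariance gives $\mathfrak{A}^+\models\phi$, and elementarity gives $\mathfrak{A}\models\phi$.
\end{enumerate}

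The main step is verifying that $Z$ is an ordered bisimulation.
\begin{itemize}
\item \emph{Atomic condition.} If $P$ is $n$-ary, then $P(x_1,\dots,x_n)$ is itself an ordered formula over $X_n$, so $\bar a Z\bar b$ gives $\bar a\in P^{\mathfrak{A}^+}$ iff $\bar b\in P^{\mathfrak{B}^+}$.
\item \emph{Forth condition.} Let $\bar a Z\bar b$ and $a\in A^+$. Consider
\[
p(x_{n+1})=\{\gamma(\bar b,x_{n+1}) : \gamma \text{ ordered over } X_{n+1},\ \mathfrak{A}^+\models\gamma[\bar a a]\}.
\]
For any finitely many $\gamma_1,\dots,\gamma_k$ in $p$, the formula $\exists x_{n+1}(\gamma_1\land\dots\land\gamma_k)$ is ordered over $X_n$ and holds at $\bar a$. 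It therefore holds at $\bar b$, so $p$ is finitely satisfiable in $\mathfrak{B}^+$. Since $p$ uses only the finitely many parameters $\bar b$, $\omega$-saturation of $\mathfrak{B}^+$ gives a realiser $b$. Ordered formulas over $X_{n+1}$ are closed under negation, so $\bar a a Z\bar b b$.
\item \emph{Back condition.} This is symmetric, using $\omega$-saturation of $\mathfrak{A}^+$.
\end{itemize}
The only delicate point is that the existential closure of an ordered formula over $X_{n+1}$ is again ordered over $X_n$, which holds by clause (iii) of the definition of ordered formulas. It is this closure, together with the built-in positional discipline of the fragment, that makes the argument go through for ordered bisimulations.
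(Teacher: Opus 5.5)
The paper gives no proof of this theorem. It imports it from \cite[Thm.~5]{Bednarczyk2022}, after adapting the definitions of ordered formulas and ordered bisimulations. Your argument is the standard van Benthem-style route: compactness reduces the problem to showing that the ordered consequences of $\phi$ entail $\phi$, and the Hennessy--Milner property of $\omega$-saturated structures supplies the bisimulation. It is correct for the definitions as stated in this paper:
\begin{itemize}
\item the atomic clause of the bisimulation matches clause~(i) of the fragment, since length-$n$ tuples only see $n$-ary predicates;
\item closure of ordered formulas over $X_{n+1}$ under Boolean combinations and under $\exists x_{n+1}$ is exactly what your forth and back steps use;
\item nothing in the argument requires $\phi$ to be free of equality or constants.
\end{itemize}

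What your self-contained proof adds over the citation is a check that the characterisation holds for the paper's adapted notions. The paper only asserts that the result ``is included in'' the cited theorem.

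The one loose end is the degenerate case where compactness returns an empty set of ordered sentences, i.e.\ $\phi$ valid, or $k=0$ in step~(i). There ``the conjunction'' should be read as an ordered tautology such as $\forall x_1\cdots\forall x_m(P(x_1,\dots,x_m)\lor\neg P(x_1,\dots,x_m))$. Such a sentence exists as soon as the signature contains at least one predicate symbol.
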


Let $\phi :=\exists u \exists v \exists w \forall x (P(x,u,w) \mland Q(x,v,w))$.
To show that this sentence is \emph{not} equivalent to any sentence of the ordered fragment, we only need to show that it is not invariant under ordered bisimulations. Let $P$ and $Q$ be ternary and $\mathfrak{A}$ the $\{P,Q\}$-structure (with domain $A$) such that:
\begin{align*}
    A&=\{a_1,a_2\}\\
    P^{\mathfrak{A}}&=\{(a_1,a_1,a_1),(a_2,a_1,a_1)\}\\
    Q^{\mathfrak{A}}&=\{(a_1,a_2,a_1),(a_2,a_2,a_1)\}.
\end{align*}
We can easily verify that $\mathfrak{A}\models\phi$. Now consider the structure $\mathfrak{A}'$ over the same domain $A$ such that:
\begin{align*}
    P^{\mathfrak{A}'}&=\{(a_1,a_1,a_1),(a_2,a_1,a_1)\}\\
    Q^{\mathfrak{A}'}&=\{(a_1,a_2,a_2),(a_2,a_2,a_2)\}.
\end{align*}
We observe that $\mathfrak{A}'\nvDash\phi$. 
Let $Z\subseteq \bigcup_{n\geq 0}(A^n\times A^n)$ be the relation that agrees with the identity relation on $\bigcup_{n\geq 0}A^n$ except that, for any $\bar{a}\in \bigcup_{n\geq 0}A^n$:
\begin{align*}
    &a_1a_2a_1\bar{a}Za_1a_2a_2\bar{a}\\
    &a_2a_2a_1\bar{a}Za_2a_2a_2\bar{a}.
\end{align*}
A routine check shows that $Z$ is an ordered bisimulation between $\mathfrak{A}$ and $\mathfrak{A'}$ and, in addition, $\varepsilon Z\varepsilon$. Thus, $\mathfrak{A}\sim\mathfrak{A}'$, and $\phi$ is not invariant under ordered bisimulations. By Theorem~\ref{thm:expressivity}, $\phi$ is not equivalent to any ordered sentence.

As SCAN tries to exhaust all consequences in first-order logic, it is not a surprise that it also produces formulas which fall beyond the expressivity of the ordered fragment. In order to compute uniform interpolants in the ordered fragment, we need to weaken the output set of clauses (so that it can be translated back to the ordered fragment) and meanwhile retain all of its consequences in the ordered fragment. At the time of writing, we do not know how this can be achieved.

\section{Conclusion}

In this paper, we have used the decidability result of unrefined
resolution
to show the uniform interpolation property of basic path logic. 
Also, we characterised the search space of the SCAN algorithm on \BPL clauses with fixed constant positions, proving that SCAN is terminating and therefore SOQE is solvable in this case. 
For computing uniform interpolants of ordered sentences, we proposed constant Skolemisation, with which the clausification of ordered sentences is proved to be consequence-preserving in the fragment (not just satisfiability-preserving). However, although constraints in the output of SCAN can always be eliminated and the resulting (equality-free) set of clauses can be expressed by an (equality-free) $\exists^*\forall^*$-formula, we note that it is not generally expressible in the ordered fragment. For future research we would like to find a method for extracting uniform interpolants in the ordered fragment from \BPL clauses.

\backmatter

\bmhead{Acknowledgments}
We would like to thank the reviewers and audience of the CADE-30
Workshop in Honor of Christoph Weidenbach’s 60th Birthday hed in August
2025, and participants of Dagstuhl Seminar 26091 on ``Revisiting the Foundations of Deduction in a New World'' held in February 2026,
where earlier versions of this work were presented.

\section*{Declarations}

\bmhead{Funding}

The research was not supported by grant funding.

\bmhead{Competing interests}

The authors have no conflicts of interest to declare that are relevant to the content of this article.

\bibliography{w60}

\end{document}